\theoremstyle{plain}
\newtheorem*{theorem*}{Theorem}
\newtheorem{theorem}{Theorem}
\newtheorem{proposition}{Proposition}[section]
\newtheorem{corollary}[proposition]{Corollary}
\def\conjecture#1{\global\advance\conjno by 1\bigskip\noindent{\bf Conjecture
\the\conjno.} {\it #1}\\}
\newtheorem*{definition*}{Definition}
\theoremstyle{definition}
\def\Cal{\mathcal}
\def\ie{{\it i.e. }}
\def\real{{\mathbb R}}
\def\r2{{\mathbb R}^2}
\def\cmplx{{\mathbb C}}
\def\intgr{{\mathbb Z}}
\def\circle{{\mathbb S^1}}
\def\cyl{{\Cal C}}      
\def\lattice{L}
\def\config{{\bf p}}
\def\tiling{{T}}
\def\tilingcov{\tilde{T}}
\def\rtmnd{{\cal RT}(M,N,D)}   
\def\rtmndk{{\cal RT}_{(M,N,D,K)}}   
\def\dmndk{{\cal DT}_{(M,N,D,K)}}  
\def\dd{\vec{d}}        
\def\uu{\vec{u}}        
\def\zz{\underline z}			
\def\diam{D}		
\def\ph{phyllotactic}
\def\PH{Phyllotactic}
\def\snow{$S$}
\def\snowm{S}
\def\qsnow{$\overline S$}
\def\qsnowm{\overline S}
\def\ie{{\it i.e.}\ }   
\def\eg{{\it e.g.}\ }
\def\resp{{\it resp.}}
\def\sameas{\Leftrightarrow}
\def\2vec#1#2{(#1, #2)}
\def\widevec#1{\overset{\hbox{\rightarrowfill}}{#1}} 
\def\ovv#1{\overline{#1}}
\begin{document}

\title{Rhombic Tilings and Primordia Fronts of Phyllotaxis}
\author{Pau Atela and Christophe Gol\'e\\ Department of
Mathematics\\Smith College } 
\date{}

\maketitle
\abstract{
We introduce and study properties of \ph~and rhombic tilings on the cylinder. These are discrete sets of points that generalize cylindrical lattices. Rhombic tilings appear as periodic orbits of a discrete dynamical system \snow~that models plant pattern formation by stacking disks of equal radius on the cylinder.  This system has the advantage of allowing several disks at the same level, and thus multi-jugate configurations.  We provide partial results toward proving that the attractor for \snow~is entirely composed of rhombic tilings and is a strongly normally attracting branched manifold and conjecture that this attractor persists topologically  in nearby systems. A key tool in understanding the geometry of tilings and the dynamics of \snow~is the concept of primordia front, which is a closed ring of tangent disks around the cylinder. We show how fronts determine the dynamics, including transitions of parastichy numbers, and might explain the Fibonacci number of petals often encountered in {\it compositae}.}
\section{Introduction} 
Phyllotaxis is the study of arrangements of plant organs.  These originate at the growing tip (apex meristem) of a plant as protuberances of cells, called primordia.
The geometric classification of phyllotactic patterns has often been reduced to that of cylindrical lattices, where the helices joining nearest primordia - called parastichies -  form two families winding in opposite directions. Counting parastichies in each family gives rise to the pair of parastichy numbers that are used to classify phyllotactic patterns. The striking phenomenon central to phyllotaxis is the predominance of pairs of successive Fibonacci numbers as parastichy numbers.

 \begin{figure}[h] 
   \centering
 \includegraphics[width=6in]{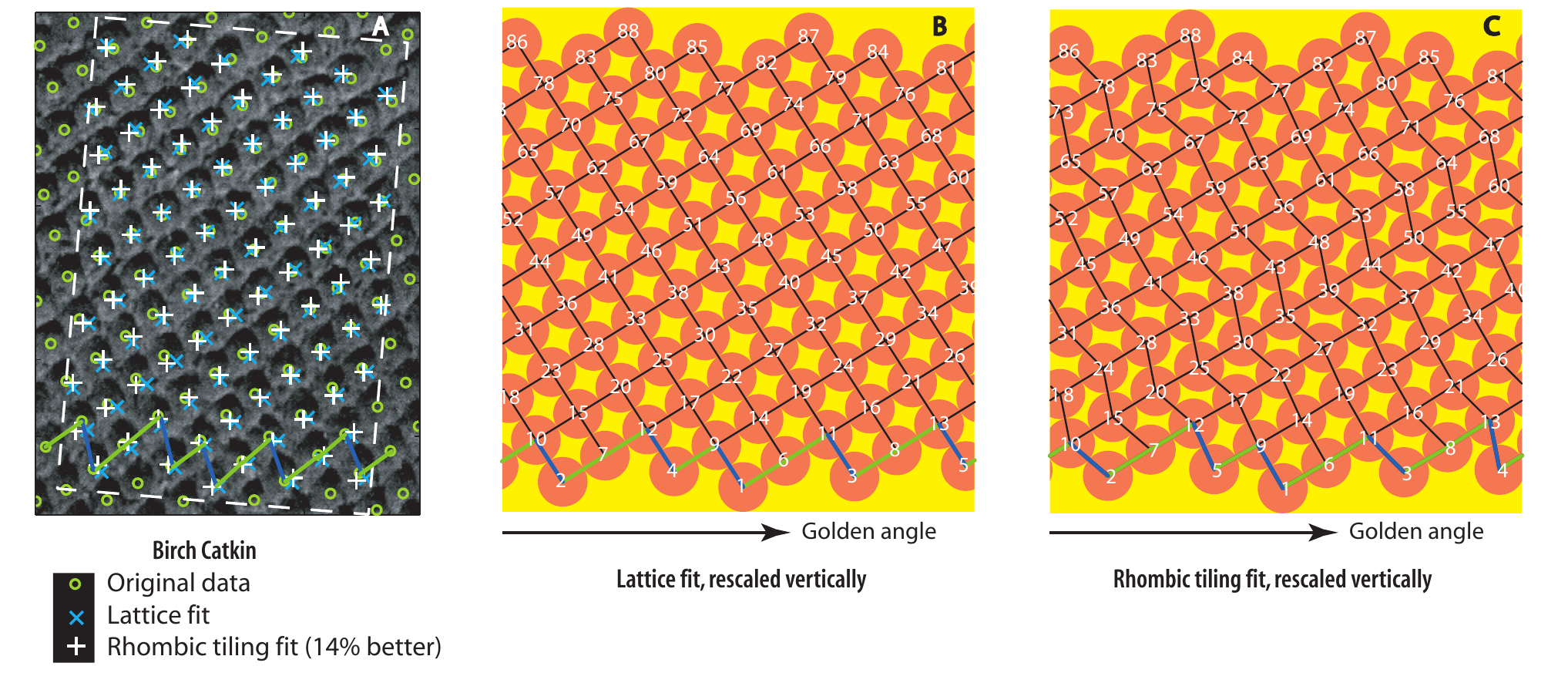} 
   \caption{\footnotesize {Each picture represents the unrolled surface of a cylinder. {\sf (A)}  The imprint of a Birch catkin rolled on clay. After a vertical compression counterbalancing anisotropic growth, we used a method of least squares with (nonlinear) constraints to fit lattices and rhombic tilings to this pattern. The results of this process are seen in the lattice in {\sf (B)}, the tiling in {\sf (C)}.  Note that there are 5 blue segments and 8 green in the fronts represented at the bottom of the pattern: they form the front parastichy numbers, and coincide with the number of parastichies of both figures.  One can think of  the tiling as a deformation of the lattice, obtained by rotating the segments of the front of the lattice. The tiling offers an improvement of the fit of more than 14 percent over the lattice, and it also accounts for the undulations of the parastichies. Note that, even though the parastichies are well defined in the rhombic tiling, the divergence angle between successive primordia (numbered according to height, and thus, presumably according to age) is widely erratic (although periodic). See for instance the differences of horizontal spacing between Primordia pairs 42, 43 and 43, 44.}}
   \label{fig:tilingfit}
\end{figure}

However, Fibonacci patterns and transitions among these are not the only ones observed in nature.   A very common transition can be seen on stems of sunflowers, for instance: after a few pairs of aligned leaves alternating at a $90^o$ angle  leaves suddenly grow in spirals yielding Fibonacci numbers.  In terms of parastichy numbers classification, the pattern with parastichy numbers $(2,2)$, (decussate), transitions to $(2,3)$. This transition is usually absent from the analysis of dynamical models of phyllotaxis, even when they can reproduce it. More generally, transitions to and from multijugate phyllotaxis, where parastichy numbers have a common divisor $k$, and  where $k$ organs appear at the same level (whorl), is not often discussed  (\cite{douadycouder}, Parts II \& III being a notable exception).  Part of the difficulty lied in the absence, in the literature, of a continuum of patterns encompassing lattices of all jugacies, and of more local geometric tools to follow the transitions as they unfold one primordium at a time.

We introduce the geometric concept of \ph~and rhombic tilings, which do encompass lattices of all jugacies, as well as patterns hitherto considered as transient. These tilings can be seen as deformations of cylindrical lattices. In contrast with lattices, they can account for the marked undulations of parastichies often observed in nature (Fig.  \ref{fig:tilingfit}).

We also reintroduce van Iterson's century old concept of ``zickzacklinie" \cite{vaniterson}, that we call here  primordia fronts (Fig.  \ref{fig:tilingfit}). These zig-zaging fronts and their parastichy numbers offer a practical and theoretical tool to understand not only the steady state tilings but also transitions from one to another, in a way that may be less confusing than the divergence angles often used in experiments (see e.g. Section \ref{subsec:fibonacci}). The concept of primordia front might also offer an explanation as to the statistical predominance of Fibonacci
numbers of ray petals in many asteracea (\cite{battjes}): the number of primordia in a front  is the sum of its Fibonacci (likely) parastichy numbers, hence itself a (likely) Fibonacci number.

We root the concepts of tilings and fronts within a simple discrete dynamical model that more or less explicitly exists since the 19th century (\cite{weisse}, \cite{vaniterson}, \cite{williams}, \cite{douady}). This system, that we call the Snow map after  \cite{douadycouder} and denote by \snow~, represents primordia formation as the stacking of disks on a cylinder, according to the simple rules: the new disk appears at the lowest level above the older ones, without overlap. As we fix the circumference of the cylinder, the diameter $D$ of the of the primordia is the fundamental parameter of this model.

Theorem \ref{thm:attractor}, brings together the geometry and dynamics of this paper, by showing that, for each parastichy number pair $M,N$  and for $D$ in an appropriate range, there exists a manifold of rhombic tilings, each of which is a periodic orbit of period $MN$ for \snow. This manifold is of dimension $M+N$, and contains the $M,N$-lattice of the fixed point bifurcation diagram for that $D$ (see Section \ref{subsec:fp.po}).  We also show that this manifold is a local attractor for \snow, and that the attraction occurs in \emph{finite time}.

We conjecture that the entire set of dynamically sustainable rhombic tilings forms a normally attracting set. This  should imply the persistence of an invariant set with comparable topology in nearby models (\cite{fenichel}) and would confer the Snow model and rhombic tilings some universality in phyllotaxis.

Although we do not study phyllotactic transitions in great detail here (see Sections \ref{subsec:fibonacci} and \ref{subsec:frontdyn}), we hope that this paper will serve as foundation for further research in that direction.  Later work will explore  the topological structure of the set of dynamically sustainable tilings, and of the dynamical transitions it allows, as well as  generalizations of these tilings to other geometries. Experimental applications of some of  the concepts discussed here, such as using fronts derived from plant data as initial conditions for growth modeling using a similar model, appeared in \cite{jpgr}. 

Recent experimental and modeling work points to the active transport of the hormone auxin \cite{auxinbern}, \cite{auxintraas},\cite{auxinmjolsness}, \cite{auxinprunsi} as the underlying mechanism of primordia formation, although some authors still advocate for a buckling explanation \cite{shipman}.  Although the type of models based on auxin transport should eventually  prove invaluable in testing the validity of proposed biological mechanisms, to date they can't easily and stably reproduce  Fibonacci phyllotaxis, and neither could they form the proper context for a geometrical explanation of its prominence. Our approach is grounded in the tradition of  dynamical/geometric models (\cite{weisse},\cite {vaniterson}, \cite{williams}, \cite{adler}, \cite{douadycouder}, \cite{douady}, \cite{leelevitov}, \cite{kunzthesis}, \cite{jns}), often based  on the botanical observations of Hofmeister \cite{hofmeister} and Snow \& Snow \cite{snow}. The model we study is also compatible with the general assumptions of  \cite{auxinbern} and  \cite{shipman}.  Our goal is to distill to their simplest and most rigorous form the geometric mechanisms that could be at play in Phyllotactic pattern formation. The concepts we develop are general enough that they may adapt to other situations, such as the assembly of the HIV-1 CA protein \cite{HIVnature}.

To motivate this otherwise rather theoretical paper, we start in Section \ref{section:numexp} by reporting on some numerical experiments, showing how phenomena encountered by iterating \snow~on a computer naturally lead to rhombic tilings and primordia fronts. We then review the classical geometry of the  cylindrical lattices and of their parastichies (Section \ref{sec:classicgeom}). In  Section \ref{subsec:parentsEtc}, we  establish the notion, for general configurations of primordia, of chains and fronts of primordia as sets of tangent primordia encircling the meristem. The parastichy numbers of chains and fronts are just the number of up and down segments as one travels around the chain.  The definition of \ph~and rhombic tiling as cyclic sums of up and down vectors follows in Section \ref{subsec:tilings}, followed by the analysis of their periodicity and properties of parastichies. In Section \ref{subsec:parastnum}, we show the equality of parastichy numbers of a tiling and of any of its fronts - thus validating the usage of the front parastichy numbers. 

We then give a rigorous definition of the Snow map \snow, followed by a study of its domain of differentiability (Section \ref{sec:snow}).
Section \ref{sec:dyngeom} brings the dynamics of \snow~and the geometry of fronts and tilings together. In Section \ref{subsec:frontdyn}, we show that the top primordia front of a configuration determines its dynamical future, and that changes in parastichy numbers can be simply read from the number of sides of the polygonal tile between a new primordium and the top front. We show that the fixed points of the map \qsnow~induced on the shape space of configurations are the same rhombic cylindrical lattices as in the Hofmeister map of \cite{jns} and conjecture  that periodic orbits all form rhombic tilings (Section \ref{subsec:fp.po}). In Section \ref{subsec:attractor}, we prove Theorem \ref{thm:attractor} on the existence of attracting sets of rhombic tilings mentioned above.  Returning to experimental results,  Section \ref{subsec:rp2} shows numerically how tilings whose parastichy numbers sum up to 4 coexist in the shape space of chains of four primordia. We show that the latter set, for the chosen parameter, has the topology of the projective plane. \\

\section{Numerical Explorations}
\label{section:numexp}
Before formally studying the concepts of rhombic tilings, primordia front, and their relation to the dynamics of the Snow map \snow, we present some numerical observations that motivated our theoretical inquiry.

 \subsection{Asymptotic Behavior of \snow}
\label{subsec:observe}

In our numerical simulations, we consistently observed that, under iterations of \snow, \emph{all}  configurations converge to  of ``fat rhombic tilings": lattice-like sets of points of the cylinder that are vertices of tilings with rhombic tiles that are not too thin (See Section \ref{subsec:tilings}). The tilings have, like the classical  lattices of phyllotaxis, parastichies: strings of tangent primordia winding up and down the cylinder in somewhat irregular helices. And as with lattices, these parastichies come in two families winding in opposite directions (Fig \ref{fig:Periodpen}).

Interestingly, we observed two distinct types of convergence to rhombic tilings: a finite time convergence and an asymptotic (infinite time) convergence. In our experiments, asymptotic convergence always involves at least one pair of pentagonal and triangular tiles, repeating along a parastichy (Fig. \ref{fig:Periodpen}). One can see from the figure (see Proposition \ref{prop:frontperiod} for a proof) that when an orbit goes through segments of a given rhombic tiling of parastichy numbers $M, N$ its shape repeats periodically, with period $MN$. Hence, orbits that we have observed are either periodic (in the shape space), preperiodic or asymptotically periodic (with triangle and pentagon pairs). Moreover we observed large continua of  tiling segments that are periodic orbits. In Theorem \ref{thm:attractor}, we prove the existence of such continua and of the preperiodicity of all orbits near steady state lattices. We will leave the analysis of the asymptotic convergence to a later work.

\begin{figure}[h] 
   \centering
   \includegraphics[height= 3.4in]{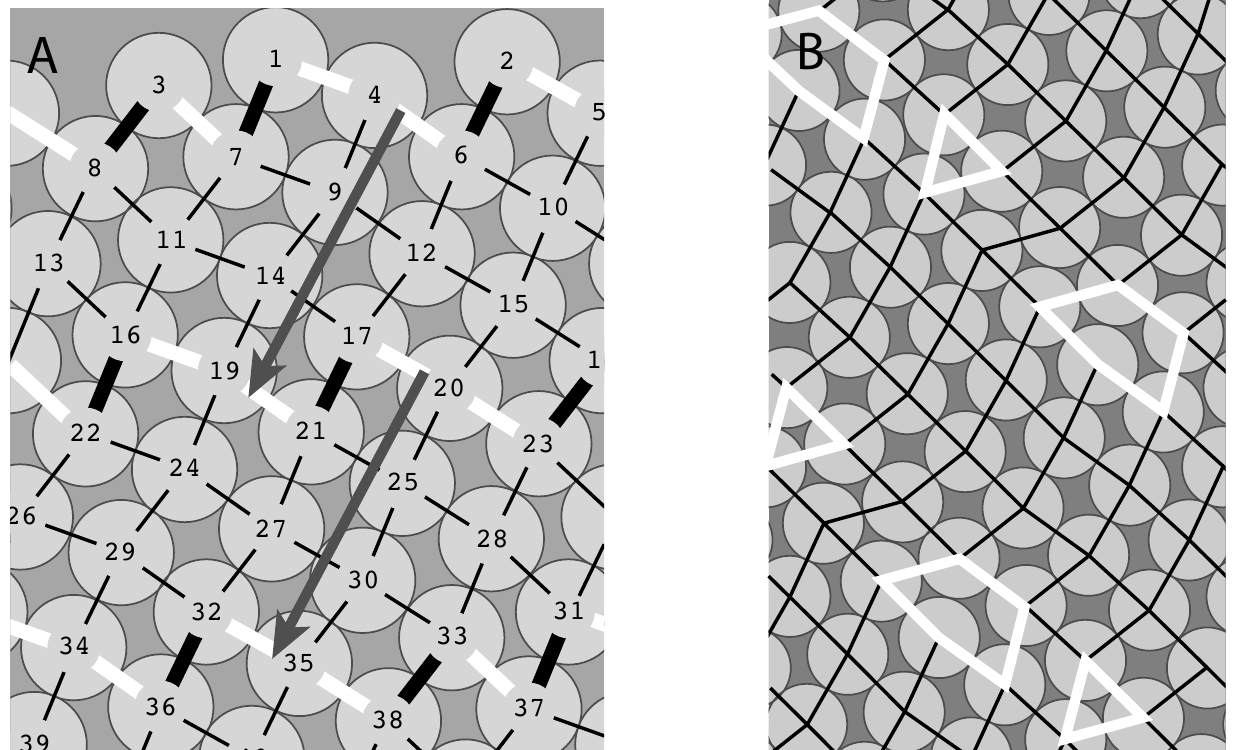} 
   \caption{\footnotesize{ {\sf (A)} Rhombic tiling of parastichy numbers (5,3) obtained by iterating the transformation \snow~on a front of parastichy numbers 5,3. The older the primordium, the greater its index. Three fronts, at primordia 1, 16, and 31, are shown with their up vectors in thick black, down vectors in thick white.  A ``period vector" (shown in dark grey) joins primordia $k$ and $k+15$, and translates a front into another periodically, with $5\times 3$ primordia in between: this is an orbit of period 15 (see Theorem \ref{theorem:RTperiodic} and Proposition  \ref{prop:frontperiod}).  {\sf (B)} Asymptotically periodic orbit, with pairs of triangles and shrinking pentagons aligned in a parastichy.}}
   \label{fig:Periodpen}
\end{figure}

It is intuitively clear that, at a given time step. of the iteration, the top  connected layer of primordia  holds the key to the dynamics and the geometry of the orbits. We call such a layer a primordia front (Section \ref{subsec:parentsEtc}). The number of ``up" and ``down" vectors forming the zigzagging curve as one travels from left to right on a front corresponds  to parastichy numbers in the case of lattices and tilings (Proposition \ref{prop:parastnum}). We call them front parastichy numbers (Section \ref{subsec:parentsEtc}). We contend that counting front parastichy numbers at each step of the iteration - which can be  programmed in either simulations or data analysis - may be less misleading than the divergence angles commonly used in this type of experiment (See Figures \ref{fig:tilingfit} and \ref{fig:divVSparast} and the next section).

\subsection{The Fibonacci Path}
\label{subsec:fibonacci}

 \begin{figure}[h] 
   \centering
   \includegraphics[width=6in]{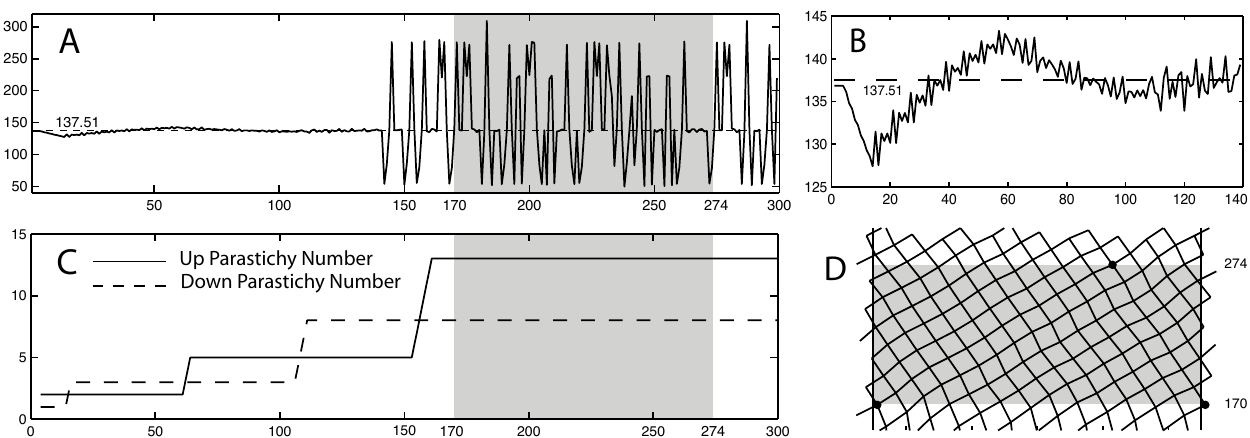} 
   \caption{\footnotesize { {\sf (A)} Divergence angle vs number of iterates of \snow. The initial condition is the $(2,1)$ steady state lattice for the parameter $\diam=.2083$ (diameter of primordia). $D$ is decreased by 1\% at each iterate, until iterate 170, after which it is kept constant. Note how close to the golden angle ($\approx 137.51^0$) the divergence angle is until iterate 140, and how wildly it oscillates after that. A periodicity of $104=8\times13$ can be observed after iterate 170. {\sf (B)} A section from {\sf (A)} blown up to show the oscillation of the divergence angle that mirrors, up to some small amplitude secondary oscillation, the zigzagging of the Fibonacci branch of the bifurcation diagram of Fig. \ref{fig:bifdiag}. {\sf (C)} The front parastichy numbers vs iterate numbers. Note the extreme regularity of this data, contrasting with the oscillations of the divergence angle. {\sf (D)} The $(13,8)$ dynamical tiling obtained after iterate 170, when $\diam$ is kept constant. The mild, nature-like undulations of its parastichies contrast with the irregularities of its divergence angles seen in (A).  }}
   \label{fig:divVSparast}
\end{figure}

The litmus test for a model of phyllotaxis is its ability to reproduce aspects of the bifurcation diagram - or fixed point set - of Section \ref{subsec:fp.po}, and especially of its Fibonacci branch.

This diagram is formed by the generators (see Section \ref{sec:classicgeom}) of the ``good" lattices of phyllotaxis, that are steady states of the given model (in this case presented, of both \snow~and the Hofmeister map of \cite{jns}). Each (dark) curve segment of the diagram corresponds to lattices with a given pair of parastichy numbers whose shapes are fixed under \snow. The $x$ coordinate of a generator corresponds to the so-called divergence angle between two consecutive points in the vertical ordering of the lattice. It also corresponds to the difference of $x$ coordinate of the new primordium and the next in an iteration process. The divergence angle, and its connection to parastichy numbers in lattices, has been widely used to explain and detect in models the Fibonacci phenomenon \cite{douadycouder}.   The Fibonacci branch of the bifurcation diagram is the largest in the diagram, and starts at lattices of parastichy numbers (1,1), corresponding to the beginning of the growth of most monocotyledonous plants. In our work on the Hofmeister map \cite{jns}, we showed that the steady state lattices are attractors, accounting for the fact that once near the Fibonacci branch, a configuration remains near it as the parameter (in that case the internodal distance) was decreased.

 We were originally pessimistic about \snow~yielding Fibonacci transitions as the parameter varies. Indeed, we had observed numerically that a steady state lattice for \snow~is part of an attracting manifold of periodic orbits and the eigenvalues of the differential are either 0 or on the unit circle (A consequence of Theorem \ref{thm:attractor}). Hence the steady states for \snow~can at best be neutrally stable. However, our experiments (Fig. \ref{fig:divVSparast} ({\sf A \& B})) show that, as we lower the diameter of the primordia while iterating \snow,  the Fibonacci phenomenon, as measured by front parastichy numbers, is in fact much more robust in our \snow~model than the divergence angle measurements indicates: while the  divergence angle can vary wildly even in an orbit  close to a lattice,  the parastichy numbers stay constant. Orbits do not have to stay too close to lattices to follow the Fibonacci route: It is sufficient that they stay in a neighborhood the substantially larger and attracting set of rhombic tilings. This flexibility allows for much faster transitions than previously thought, in a time scale observed in plants, as we will show in future work.  Last but not least, the strong attraction of \snow~orbits to the set of rhombic tilings should make this set persist topologically in nearby systems.

\section{Classical Geometry of Phyllotaxis}
\label{sec:classicgeom}
\subsection{Underlying Geometry} In this paper, we concentrate on cylindrical phyllotaxis. We normalize the cylinder $\cyl$ to have circumference 1.  Mathematically, $\cyl $ is the cartesian product $\circle\times \real$ of the unit circle $\circle= \real/\intgr$ with the reals. Note that fixing the circumference of the cylinder does not mean that we preclude lateral  plant growth in our modeling. We make this convenient normalization choice without loss of generality since, in the patterns we study, the important parameters (such as the ratio $D$ of the size of primordia relative to the diameter of the meristem) are independent of scale. Both botanists and mathematicians often unroll cylindrical patterns on the plane $\real^2$, which can also be seen as the complex plane $\cmplx$. This is the covering space of the cylinder (see Section \ref{subsec:cover}).  We will use the same notation for points and vectors in $\real^2$ and $\cyl$. By a \emph{configuration}, we mean a finite set of points in $\cyl$ ordered by height. These points represent centers of primordia along the stem.

\subsection{Covering Space Notions and Notation}
\label{subsec:cover}
We often describe objects in the cylinder via their covers and lifts in the plane. The intuitive notion of cover of a set, in the case of the cylinder is simple: mark each point of the set with ink, and use the cylinder as a rolling press. As you roll the cylinder indefinitely on the plane, the points printed form the cover of the original set. Each piece of the cylindrical pattern is repeated at integer intervals along the $x$-direction. The cover of a helix, for example, is a collection of parallel lines. The lift of a helix at a point is the choice of one of these lines.

Here is a more rigorous description of these classical concepts \cite{munkres} and notation that we will be using.   The natural projection $\pi: \r2  \mapsto\cyl$ which maps a point $(x,y)$ to $(x \text{ mod } 1, y)$ is a \emph{covering map} and the plane $\r2$ is a \emph{covering space} of the cylinder $\cyl$. This means that $\pi$ is surjective, and that around any point $z$ of $\cyl$, there exists an open  neighborhood $U$ such that $\pi^{-1}(U)$ (the inverse image of $U$) is a disjoint union $\cup U_k$ of open sets of the plane each homeomorphic to $U$. One says that $\pi$ is a \emph{local homeomorphism} and 
 that $U$ is \emph{evenly covered}. In the case of the cylinder, $\pi$ is also a local isometry, for the metric induced by $\pi$ on the cylinder. A subset $X$ of $\r2$ is a \emph{fundamental domain} if $\pi: X\mapsto \cyl$ is a bijection. Any region of $\r2$ of the form $\{(x, y)\in \r2 \mid a\leq x<a+1\}$  is a fundamental domain.
 
 The \emph{cover} of a subset $Y$ of $\cyl$ is the inverse image $\tilde Y = \pi^{-1}(Y)$ of $Y$.   A set $\tilde Y$ of the plane is a cover of its projection $\pi(\tilde Y)$ if and only  if $\tilde Y+ (1, 0) = \tilde Y.$
The ``tilde" notation as above is often used to denote covering spaces. In this paper, we also use the underline notation to denote the projection of a set in the plane to the cylinder: $\underline X = \pi(X)$.

As with all covering maps, $\pi$ has the lifting property: if  $\gamma$ is a path in $\cyl$ and $c\in \r2$ is a point ``lying over" $\gamma (0)$ (i.e. $\pi(c) = \gamma(0)$), then there exists a unique path $\rho \in \r2$  lying over $\gamma$ (i.e. $\pi \circ \rho = \gamma$) and with $\rho(0) = c$. The curve $\rho$ is called the \emph{lift of $\gamma$ at $c$}.The lift of a path is only a connected part of its cover: for instance the lift at $(3, 0)$ of the line of equation $\underline x = 0$ of the cylinder is the line $x=3$ in the plane, whereas its cover is the union of all the lines $x=k, k\in \intgr.$

\subsection{Cylindrical Lattices, Helical Lattices, Multijugate Configurations }
\label{subsec:lattices}
 A \emph{cylindrical lattice} $L$  is a set of points in $\cyl$ whose cover $\tilde L$ is a lattice of $\r2$:
 
  $$ \tilde L = \left\{ m \vec v+n \vec w \in \real^2 \mid  m,n\in \intgr\right\},$$ 
where $\vec v,\vec w \in \r2$ are independent generating vectors.   Note that $\tilde L$ a discrete subgroup of $\r2$ isomorphic to $\intgr^2$.
Since $\tilde L$ is a cover, it must be invariant under translation by $\2vec 10$. Changing bases if necessary, one can assume that $\vec w = \left(\frac 1k, 0\right)$ for some positive integer $k$, called the \emph{jugacy} of the lattice.  

If  $k = 1$, we say that $L$ is \emph{monojugate} or that it is a \emph{helical lattice}. In this case $\vec w = (1,0) = (0,0) \mod 1$ and  $L$ has the unique generator $\vec v$. If $k>1$, $L$ is called a \emph{multijugate configuration} or specifically a $k-$\emph{jugate} configuration (or $k$-jugate lattice). A cylindrical lattice $L$ is a discrete subgroup of $\cyl = \circle\times \real$ isomorphic to $ \intgr\times\intgr/{k\intgr}$ (simply $\intgr$ in the case of a helical lattice). In a $k$-jugate lattice, each point is part of a set of $k$ points, called a \emph{whorl}, evenly spread around a horizontal circumference of $\cyl$.

 Parastichies of a helical lattice $L$ are helixes joining each point of $L$ to its nearest neighbors. 
We now make this more precise. In general, there are two points of $\tilde L$ nearest to $0$ in the positive half plane. Say $z_M= M\vec v + (\Delta_M, 0)$  and $z_N= N\vec v + (\Delta_N, 0)$  nearest to 0, where $M,N, \Delta_M, \Delta_N\in \intgr$. Also assume that $\widevec{0z_M}$ makes a larger angle with the horizontal than $\widevec{0z_N}$, so in particular $\widevec{0z_M}$ and $\widevec{0z_N}$ are not colinear. The line through 0 and $z_M$ lifts a helix in $\cyl$  that contains all the points $\pi(kz_M)=\pi(kM\vec v), k\in \intgr$. The set of these points is called a \emph{parastichy}, and the helix connecting them a \emph{connected parastichy}. There are $M$ helixes, also called parastichies, parallel to this one. Each goes through a set of points $\left\{\pi(kz_M+jz_N)\right\}_{k\in \mathbb Z}$, for a fixed $j\in \{0, \ldots, M-1\}$. To prove this, one shows that when $j=M$, one obtains another lift of the original parastichy, using 
\begin{equation}
\label{eqn:MNcoprime}
N\Delta_M-M\Delta_N= 1
\end{equation}
This is a consequence of $z_M, z_N$ being closest to 0 (\cite{jns}, Proposition 4.2) and it implies that $M$ and $N$ are co-prime ($\gcd(M,N) = 1$).  Thus there are $M$ parastichies and they  correspond to a cosets of the subgroup $M\intgr$ of $\intgr$. Likewise, there are $N$ parallel parastichies joining the second closest neighbors. 

\begin{figure}[h] 
   \centering
   \includegraphics[width=6in]{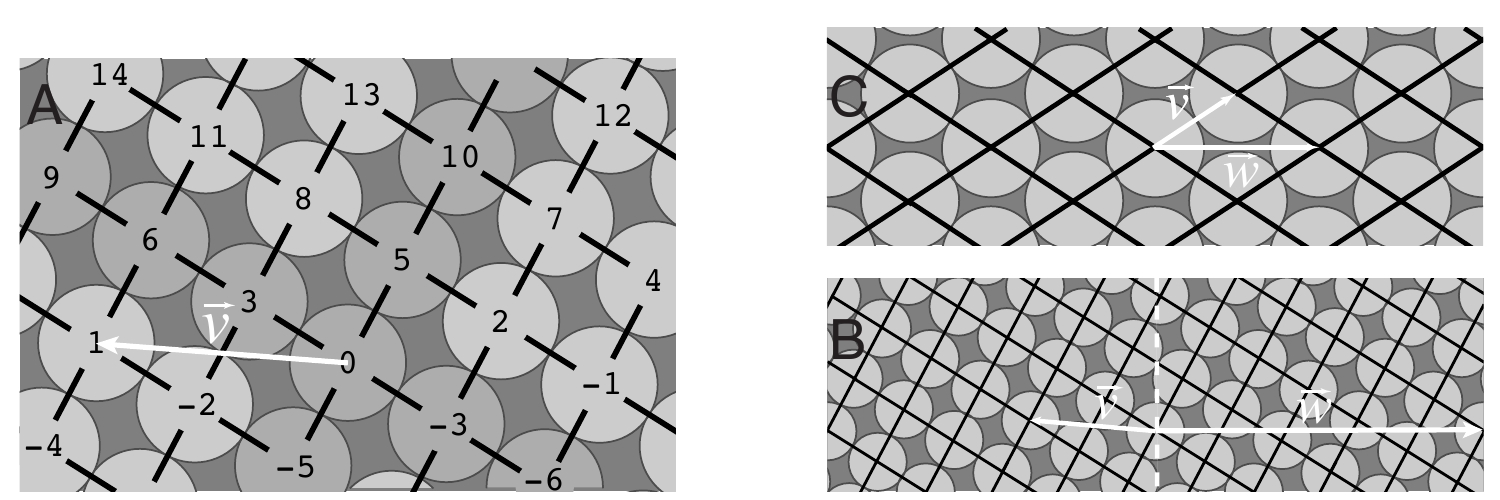} 
   \caption{\footnotesize {\bf Cylindrical Lattices.} Each represented in a fundamental domain of their cover with $-\frac12<x\leq \frac 12$. All these lattices are rhombic: each primordium is equidistant to its 4 nearest neighbors. {\sf (A)}  3, 5 helical lattice, with only one generator $\vec v$. We only show the indices $k$ of the points $z_k$. Note how $z_3$ and $z_5$ are the closest to $z_0$. We have shaded the parastichies through $z_3$ and $z_5$.  There are 5 parastichies parallel to that through $z_5$, and 3 parallel to that through $z_3$.    {\sf (B)}  2-jugate (bijugate) 6,10 lattice, obtained by rescaling two copies of the one in (A) by  $1/2$, and setting them side by side on the cylinder. This lattice has two generators, $\vec v$ which is half the vector $\vec v$ of $A$, and the vector $\vec w = (1/2,0)$.  each primordium is in a whorl of 2, separated by $\vec w$.  {\sf (C)}  4,4 lattice (4-jugate), with two generators  $\vec v$ as shown  and $\vec w = (1/4, 0)$. Each primordium is in a whorl of 4. }    \label{fig:Lattice&Whorls}
\end{figure}

If $L$ is a $k$-jugate lattice, we can trace parastichies through nearest neighbors in a similar fashion. This time the parastichy numbers $M$ and $N$ must have the common divisor $k$. An intuitive way to see this is that, rescaling the cover $\tilde L$ of  $L$ by $k$, one obtains the cover of a helical lattice, call it $L_h$. Build the parastichies through nearest neighbors for $L_h$ as before. Then rescale back by $1/k$ - the rescaled parastichies are parastichies of $L$. Since you need $k$ copies of the rescaled $L_h$ to go around the cylinder, you need $k$ rescaled copies of each parastichy of $L_h$, crossing the $x$-axis at intervals of $1/k$,  to get all the parastichies of $L$.
 
Thus, all cylindrical lattices can be classified by their parastichy numbers $(M,N) = k(i,j)$ where $k$ is the number of primordia in a whorl. Helical lattices are the special case where $k=1$. \emph{Whorled} configurations,  where primordia in a new whorl are placed midway between those of the previous whorl, is another notable case, which corresponds to $k(1,1)$.  

\subsection{Limitations of Cylindrical Lattices in Phyllotaxis.}
In Section \ref{subsec:fibonacci}, we presented a numerical simulation showing Fibonacci transitions along orbits of \snow~when the parameter $D$ is decreased. We argued briefly that the dynamical transitions observed  mirrored the continuous geometric deformation of helical lattices along the main Fibonacci branch of the bifurcation diagram. The existence of the (connected) Fibonacci branch has been the basis of many explanations of the phenomenon since the $19^{th}$ century\footnote{This neat \emph{geometric} fact has often been a source of confusion between the global deformation of a pattern and its  transitions via a \emph{dynamical} process with varying parameter.} (\cite{weisse}, \cite{vaniterson}).   Unfortunately, this kind of argument, made rigorous for the Hofmeister model in \cite{jns}, cannot work for transitions involving a change of jugacy in the pattern. One of these transitions, from $2(1,1)$ (decussate) to $(2,3)$ (Fibonacci spiral) is the norm in the vast majority of dycotyledonous plants such as the sunflower, where after a few whorls of two leaves at $90^o$ angle, symmetry is broken and a spiral pattern emerges. This transition cannot be attributed to the proximity of iterated patterns to a continuous path \emph{within} the set of lattices between lattices of parastichy numbers $(2,2)$ and $(2,3)$. Indeed, no such continuous path exists, since it would have to involve the continuous deformation of the vector $w = (1/2, 0)$ into $(1,0)$ within the discrete set of rational vectors of the form $(1/k, 0)$, which is clearly absurd.

Even in a Fibonacci transition, the global geometric deformation of lattices (orthostichies becoming parastichies when $D$ decreases) does not translate easily into a dynamical understanding of the transitional region.  In short, we need more flexible and local geometrical tools to better describe  dynamical transitions.

\section{New Geometry for Phyllotaxis}
\label{sec:newgeom}
 In this section, we introduce  primordia fronts and \ph~tilings.  They address the limitations noted in the previous paragraph.   Fronts are  local in nature and are well defined in the setting of general configurations of points of the cylinder. We will show that fronts are key in understanding transitions. \PH~tilings and more specifically rhombic  tilings allow many more deformations than cylindrical lattices while still featuring parastichies.  We give an algebraic definition of these tilings as a set of points obtained by cyclically adding ``up" and ``down" vectors. In later sections, we derive the geometric and periodicity properties of these tilings and of the tiles they bound.

\subsection{Parents, Ontogenetic Graphs, Fronts, Local Parastichy Numbers}
\label{subsec:parentsEtc}
This subsection gives definitions regarding very general configurations of points of the cylinder. They can naturally be adapted to other geometries (cone or disk) as well (see \cite{jpgr}). We consider general configurations of a  number $K$ of disks of a given diameter $D$ in the cylinder.  These  configurations are given by their centers $(p_1, \ldots , p_K)$ and they form the set $\cyl^K$,  Cartesian product of $K$ copies of the cylinder. Occasionally, we need to consider countably infinite configurations as well.

 The \emph{ontogenetic order} for a configuration in $\cyl^K$ is a choice of indices $\{1, \ldots, K\}$  for the points which corresponds to the following order of the points coordinates:
$$  i > j \sameas y(p_i) < y(p_j)  \text{ or }\{ y(p_i) = y(p_j) \text{ and } x(p_i) > x(p_j)\}, $$ where we choose the fundamental domain $x \in \left(-\frac 12, \frac 12 \right] $. 

Often, such  as with lattices, we consider finite configurations that are pieces of infinite ones. A configuration ${\config}\in \cyl^K$  comprising all the points of an infinite configuration $X$ between some $p_i$ and $p_{i+K}$ in the ontogenetic order of $X$ is called a \emph{segment} of $X$ of length $K$.

 A primordium $p_j$ is a \emph{left (resp. right) parent} of $p_i$ if it is tangent below and to the left (resp. right) of $p_i$.  More precisely, we adopt the convention that, for $p_j$ to be left parent of $p_i$, the coordinates $x, y$ of the vector $\widevec{p_ip_j}$ must satisfy $-1<x<0, y\leq 0$ and $x^2+y^2 =D^2$, and $1>x\geq 0, y<0, x^2+y^2 =D^2$  for $p_j$ to be right parent of $p_i$. 
 In the obvious fashion,  $p_i$ is a right (resp. left) child of $p_j$ if $p_j$ is a left (resp. right) parent of $p_i$.
 
 The \emph{ontogenetic graph} of a primordia configuration is the directed graph embedded in $\cyl$ whose vertices are the centers of the primordia and where oriented edges are drawn between primordia and their parents (if they have any).

Given an ontogenetically ordered configuration $\config$  of $\cyl^K$, we call \emph{parents data} the information about which primordia are parents of which primordia. One way to represent this data is by a $K\times K$  \emph{parents data matrix}, whose $(i,j)^{th}$ entry   is $1$ if $p_j$ is left  parent of $p_i$, $-1$ if $p_j$ is right  parent of $p_i$ and 0 if $p_i$ is not a parent of $p_j$.  Note that the absolute value of this matrix is just the adjacency matrix of the (directed) ontogenetic graph.

A \emph{primordia chain} for a configuration is a subset $\{p_{i_1}, \ldots p_{i_q}\}$ of distinct points in the configuration such that: 

\begin{itemize}
\item{} The chain is connected by tangencies: for all $k \in {1, \ldots, q}$, primordium $p_{i_{k+1}}$ is either a right parent or right child of $p_{i_k}$.
A chain can thus be represented by a piecewise linear curve through the centers of its primordia, which can be lifted to $\r2$.
\item{} The chain is  \emph{closed} and does not fold over itself: the point $p_{i_q}$ is either a left parent or left child of  $p_{i_1}$ and any lift at a point $P$ with $\pi(P)=p_{i_1}$ of the chain is the graph of a piecewise linear function over the $x$ axis in $\r2$ joining $P$  to its translate $P+ (1,0)$.

\end{itemize}

The vector  $\widevec{p_{i_{k}}p_{i_{k+1}}}$ is an \emph{up vector} of the chain if $p_{i_{k+1}}$ is  a right child of $p_{i_{k}}$.
The vector   $\widevec{p_{i_{k}}p_{i_{k+1}}}$ is a \emph{down vector} if $p_{i_{k+1}}$ is  a right parent of $p_{i_{k}}$.
We call the number of up (\resp down) vectors in a chain its \emph{right} (\resp \emph{left}) parastichy number.
If $p_{i_{k+1}}$ is always parent of   $p_{i_{k}}$ for $k=1, \ldots m-1$ and then always a child for $i = m , \ldots, q$,  we call the chain a \emph{necklace}. \\

Given a configuration ordered ontogenetically, a \emph{front at $k$} is a chain with primordia of indices greater or equal to $k$, such that any  primordium (not necessarily in the configuration) which is the child of a primordium in the chain, without overlapping any other primordium in the chain, is necessarily at a height greater or equal to that of $p_k$. The parastichy numbers of a front are called \emph{front parastichy numbers}. 


 \remark{ Most of the notions defined above are applicable to plant data by relaxing the definition of left (resp. right) parent to that of ``closest primordia below to the left (resp. to the right)" with some tolerance level. In the case of configurations on the disk, ``below" translates to ``farther away from the meristem" (see \cite{jpgr}). Algorithms using these notions were also used to produce Fig. \ref{fig:tilingfit} and \cite{fig:divVSparast}.}

\subsection {Phyllotactic Tilings}

\label{subsec:tilings}

A \emph{\PH~tiling}  is a set of points of $\cyl$ that can be obtained by summing to a base point cyclically ordered sums of ``up'' and ``down'' vectors. More precisely, a tiling $\tiling$ is determined by a base point $(a,b)\in \r2$, \emph{down vectors} $\dd_1, \ldots, \dd_M \in \r2$  where each $\dd_k$ has components $x\geq 0,  y< 0$,  and \emph{up vectors} $\uu_1, \ldots, \uu_N \in \r2$, each with components $x>0, y\geq 0$.  Moreover, we ask that
\begin{equation}
\label{eqn:U+D=1}
\sum_{j=1}^M \dd_j + \sum_{i=1}^N\uu_i = (1,0)
\end{equation}

We then define 
$$\tiling= \{\zz_{m,n}\in \cyl \mid m,n \in \intgr\},$$
 where   $\zz_{m,n} = \pi(z_{m,n})$ with:
 
 $$
z_{m,n}= (a,b) +  sumdown(m)+sumup(n) 
$$
and where
\begin{equation}
\label{eqn:sumupdown}
sumdown(m) =\left\{\begin{array}{ccl} \sum_{j=1}^m \dd_j &\text{if}& m>0\\ 
0 &\text{if}&m = 0\\
 \sum_{j=0}^{|m|-1} -\dd_{M-j} &\text{if}& m<0
 \end{array}\right. 
\end{equation}
and we use the periodicity convention $\dd_{j+M}=\dd_j,  \forall j \in \intgr$. The function $sumup(n)$ is a cyclical sum of up vectors defined similarly, with the convention that $\uu_{i+N}=\uu_i,  \forall i \in \intgr$.   The numbers $M,N$ of down and up vectors are called the \emph{parastichy numbers of the tiling}, a terminology justified by Proposition \ref{prop:paras}.

The \ph~tiling  $\tiling$ is \emph{rhombic} if all the up and down vectors have same length $\diam$, and we call $\diam$ the \emph{length} of the tiling.
A \emph{fat  tiling} is a \ph~tiling such that the angles between \emph{any} two down and up vectors are in the interval $[\pi/3, 2\pi/3]$.  It is not hard to see that the ontogenetic graph of a rhombic tiling is the embedded graph in the cylinder whose vertices are the points of the tiling and the edges are the down and negative up vectors connecting them. For general  \ph~tiling, we call this graph the \emph{graph of the tiling}. We call the connected components of the complement of the graph its \emph{tiles}.

\remark{The following are simple but important consequences of the previous definitions:
\begin{itemize}
\item Equation \eqref{eqn:U+D=1} implies that the set $\tilde\tiling= \{z_{m,n}\in \r2 \mid m,n \in \intgr\}$  is indeed the cover of the tiling $\tiling$: if $z \in \tilde\tiling$, so does  $z + (1,0).$
\item Equation \eqref{eqn:U+D=1} also  implies that $\zz_{m+kM,n+kN} = \zz_{m,n}$ for all $n, m, k \in \intgr$.  It also implies that the down vectors stringed together, followed by the up vectors form a necklace, which clearly has parastichy numbers $N, M.$
\item If all the down vectors $\dd_j= \dd$ are equal are equal and if  up vectors $\uu_i= \uu$ are equal, and if the tiling parastichy numbers $M, N$ are coprime the tiling is in fact a lattice  of parastichy numbers $M,N$, generated by $\dd$ and $\uu$. If the tiling is fat, $\dd = -z_M$ and $\uu = z_N$ in the notation of Section \ref{subsec:lattices}.  If $M, N$ are not coprime, the tiling is a multijugate configuration.
\item The condition of ``fatness" implies that a rhombic tiling can be seen as a configuration of tangent disks with no overlap. 
\end{itemize}
\label{remark:tiling}

 \begin{figure}[h] 
   \centering
   \includegraphics[width=6in]{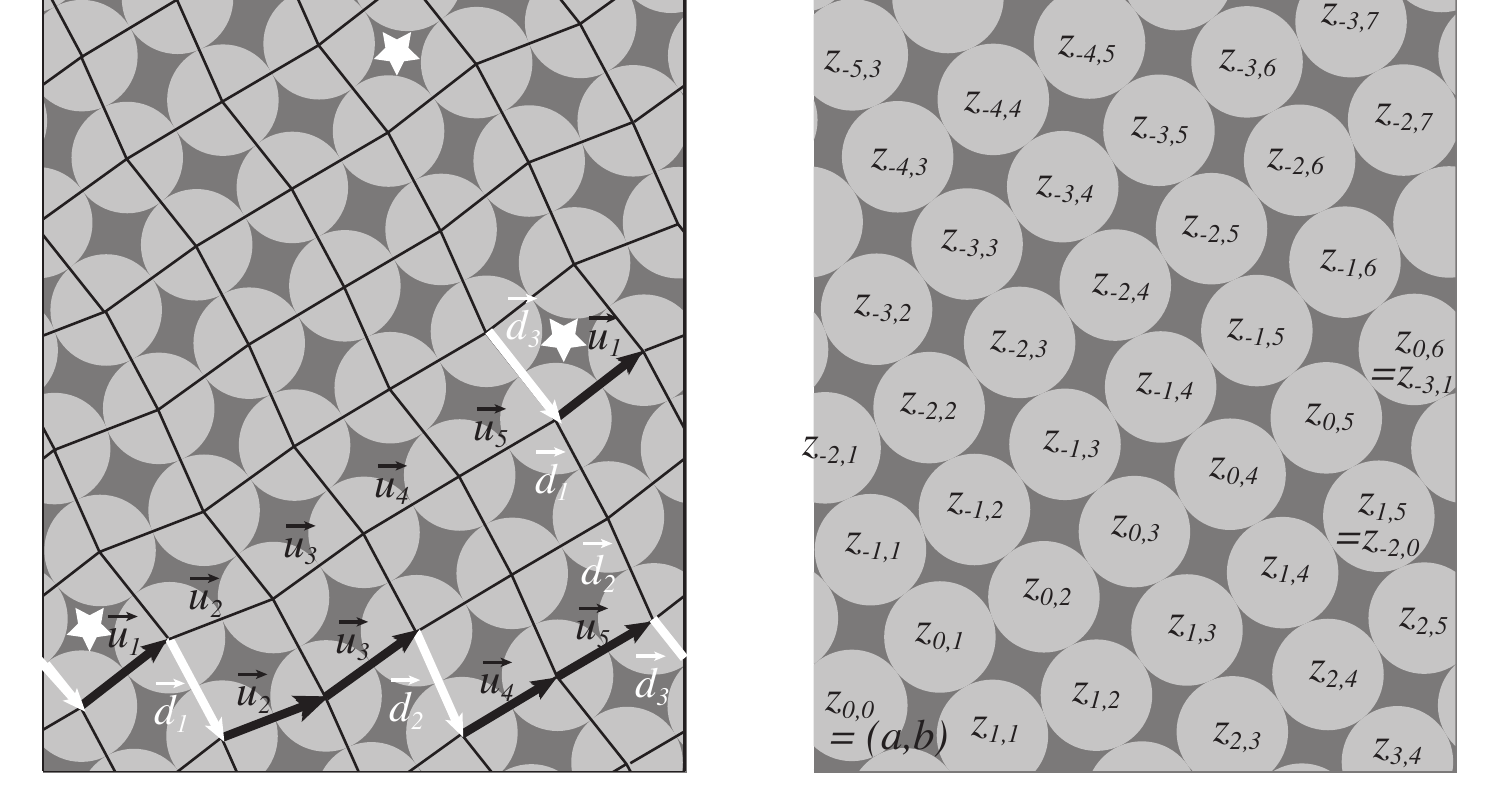} 
   \caption{\footnotesize { A fat rhombic tiling of parastichy numbers $M=3, N=5.$ On the left, the up and down vectors are shown, on a front, and on a necklace.  Translating the tiling from one star to another does not change it: If a pair of down and up vectors meet at a point, they meet again at the other extremity of the necklace they both belong to, inducing periodicity in the tiling (Theorem \ref{theorem:RTperiodic}).  On the right, the $\intgr^2$-like numbering of primordia of the same tiling. The equation  $\zz_{m+kM,n+kN} = \zz_{m,n}$ is shown at primordia $z_{1,5}$ and $z_{0,6}$. Note that parastichies are obtained by keeping all the points that have same down (\resp up) index. }}
   \label{fig:orm}
\end{figure}
Given a certain type of subsets of the cylinder, we say that two subsets have the \emph{same shape} if one is a translation of the other.  We call the set of all distinct classes of shapes the \emph{shape space}. 

\proposition{When $(M+N)\diam>1$, the set $\rtmnd$ of opposed rhombic tilings of parastichy numbers $(M,N)$ with vectors of length $\diam$ is  a manifold of dimension $M+N$, possibly with boundaries and corners. Every strictly fat lattice is contained in an open neighborhood of such a manifold. The set of fat   rhombic tilings is a submanifold (with possibly more boundaries and corners) of $\rtmnd$. The set $\rtmndk$ of segments of length $k>M+N$ of tilings in $\rtmnd$ is homeomorphic to $\rtmnd$. When considering the shape spaces of these respective types of objects, subtract 2 to the dimension of each set above.}

\proof We parameterize $\rtmnd$ by the two independent variables for the base point $(a,b)$, and $M+N-2$  angles with the horizontal of  $M-1$ down and $N-1$ up vectors. The last down and up vectors are given by the two sides of an isosceles triangle of equal sides of length $\diam$ between the points $(1,0)$  and $\sum_{j=1}^{M-1} \dd_j + \sum_{i=1}^{N-1}\uu_i$.  Boundaries are determined by the inequalities  $-\frac{\pi}2\leq \theta_j<0$ on the angles of down vectors, $j\in \{1, \ldots, N-1 \}$ and $0\leq \upsilon_j <\frac\pi 2, \quad  j\in \{1, \ldots, M-1\}$ for the angles of up vectors, as well as some more complicated inequalities (that we won't make explicit here) involving differentiable functions of these $M+N-2$ angles that guarantee that the last up and last down vectors can be defined and satisfy the above inequalities as well. The statement on lattices derives from the fact that, for  a strictly fat lattice, the inequalities on angles stated above are all strict. The condition of fatness only adds further inequalities on the angles $\theta_j, \upsilon_i$. The set of segments of tilings of sufficient length is parameterized by the same angles, with the same inequalities as the set of  itself, and thus these sets are homeomorphic. When considering the shape spaces, the base point is removed from the parameterization, lowering the dimension by 2. \qed
\label{prop:dimension}

\subsection{Periodicity of \PH~Tilings }

We say that an infinite configuration of points $X$ in $\cyl$ or $\r2$ has \emph{period vector} $\vec V$ if $X+\vec V = X.$

\theorem{\bf (Periodicity of Tilings) }{The cover $\tilingcov$ of a \ph~tiling $\tiling$ with down and up vectors and   $ \dd_1, \ldots, \dd_M$ and $ \uu_1, \ldots, \uu_N$ has the two independent period vectors  $\vec D = \sum_{j=1}^M \dd_j$ and $\vec U = \sum_{i=1}^N \uu_i$. Since in $\cyl$ these two vectors sum to 0,  $\tiling$ has only one independent period vector, say $\vec U$.}
\label{theorem:RTperiodic}
\begin{proof} Pick a point $z_{m,n}\in \tilingcov$. We will do the case $m,n >0$, the other cases derive from this one and the equality $z_{m+kM,n+kN} = z_{m,n}+(k,0)$. From $\uu_{j+N} = \uu_j$, we obtain:
\begin{eqnarray*}
z_{m,n} + \vec U &=& (a,b) +\sum_{i=1}^m \dd_i +  \sum_{j=1}^n  \uu_j + \sum_{j=1}^N  \uu_j\\
&=& (a,b) +\sum_{i=1}^m \dd_i +  \sum_{j=1}^{n+N}  \uu_j = z_{m,n+N}  \in \tilingcov
\end{eqnarray*}
Likewise, we obtain $z_{m,n} + \vec D = z_{m+M,n} \in \tilingcov.$ This proves that $\tilingcov + \vec U \subset \tilingcov$ and $\tilingcov + \vec D \subset \tilingcov$. Inclusions in the other direction are proven identically, by subtracting the vectors $\vec U, \vec D$ to points of $\tilingcov$ and showing that one obtains points of $\tilingcov$. The statement about the periodicity of $\tiling$ already contains its proof. \end{proof}

The periodicity above can be expressed by saying that \ph~tilings are multilattices.
 A \emph{multilattice} $\Lambda$ of $\cyl$ or $\r2$ is the union of a number $k$ of copies of the same lattice $L$, each translated by a different vector $v_i$:
$$
\Lambda = \bigcup_{i=1}^k (\vec v_i+L).
$$
} 

Note that the generating vector(s) of the lattice $L$ is (are) period vector(s) of the multilattice.  To see that a \ph~tiling is indeed a multilattice, let $L$ be the cylindrical lattice generated by  $\vec U = \sum_{k = 1}^{N} \uu_k$ and let  the $v_i's$ be vectors of the form $\sum_{i=1}^m \dd_i +  \sum_{j=1}^{n},\uu_j,  m\in \{1, \ldots,M\}, n\in \{1, \ldots,N\}$. This interpretation of tilings as multilattices explains the appearance of ``parallelogram" shaped tiles of ``size" $M, N$ that repeat periodically in a tiling.

\subsection{Parastichies of \PH~Tilings} 

We call \emph{parastichies} at a point $\zz_{m,n}$ of a tiling in $\cyl$ the two subsets of the tiling obtained by successively adding or subtracting all the up vectors (starting with $\uu_{m+1}$ and following the cyclical order) or all the down vectors (starting at $\dd_{n+1}$ and following the cyclical order). We call these parastichies \emph{left and right parastichies} respectively, denoting their directions as they are traversed down from the point $\zz_{m,n}$.
A \emph{connected parastichy} is the piecewise linear  curve formed by joining the successive parastichy points with the up or down vectors that connect them. We denote by $LP_{m,n}$ and $RP_{m,n}$ the lifts of the left and right connected parastichies through  $z_{m,n}.$ Hence $LP_{m,n}$ is the piecewise linear curve through the points $\{z_{m,j} \mid j\in \intgr\}$ and $RP_{m,n}$ is the piecewise linear curve through the points $\{z_{i,n} \mid i\in \intgr\}$, where two successive points are joined by an up (resp. down) vector.
\\
In the case that the tiling is a fat lattice or multijugate configuration, the above definition coincides with the classical definition of parastichy: the regular helices joining nearest left  (resp. nearest right) neighbors. 
\\
In the case that the tiling is a fat lattice or multijugate configuration, the above definition coincides with the classical definition of parastichy: the regular helices joining nearest left  (resp. nearest right) neighbors. 

The following underlines the similarity between tilings and lattices and justifies the qualifier of rhombic given to some of the tilings we consider:

\proposition{{\bf(Properties of lifted parastichies)} \label{prop:propliftparas} The lift of a connected left parastichy $LP_{m,j}$ intersects the lift of a connected right parastichy $RP_{i,n}$ at the unique point $z_{m,n}$.  The curves $LP_{m,n}$ and $LP_{i,k}$ intersect if and only if $m=i$ (in which case they are equal); $RP_{m,n}$ and $RP_{i,k}$ intersect if and only if $n=k$ (in which case they are equal). 

The tiles of the cover of  a   \ph~tiling are parallelograms, whose vertices are of the form $z_{m,n}, z_{m-1,n}, z_{m,n-1}, z_{m-1,n-1}$ for some integer pair $m, n$. In the case of a rhombic tiling, these tiles are rhombi.}

\proof The curves  $LP_{m,j}$ and $RP_{i,n}$ contain $z_{m,n}$.  We show that they do not intersect in any other point. If they crossed at another point of the tiling cover, there would exist integers $k,l$ such that $z_{m,k}= z_{l,n}$. But this implies 
$\sum_{j=m+1}^l \dd_j = -\sum_{i=k+1}^n\uu_i$ (we've assumed $l>m>0, n>k>0$, other cases are similar) which is absurd in the plane since up and down vectors are in different quadrants. The case where the parastichies cross at segments between tiling points would yield an equally absurd equality between a linear combination of up vectors and a combination of down vectors - with real coefficients this time. 

\lemma{In the lift of a   \ph~tiling, each curve $LP_{m,n}$  is homeomorphic to a line and  separates $\r2$ into two unbounded regions homeomorphic to a half plane, containing the respective subsets  $\{ z_{i,j},  i<m, j\in \intgr\}$ and  $\{ z_{i,j}, i>m, j\in \intgr\}$ of the tiling. And similarly for $RP_{m,n}$.}

\proof (of the lemma).  
$LP_{m,n}$ is a periodic perturbation of the line $L_l: t \mapsto z_{m,n}+ t U$ where $U=\sum_{i=1}^N \uu_i$, whereas  $RP_{m,n}$ is a periodic perturbation of the line  $L_r: t \mapsto z_{m,n}+ t[(1, 0)- U]$, and thus lifts of connected parastichies have the same asymptotic directions as the corresponding lines $L_l, L_r$. Since all up vectors are in the same quadrant, the orthogonal projection on $L_r$  of $LP_{m,n}$  is a homeomorphism which can be extended to an isotopy (bijective, continous deformation) of the plane, and similarly for the left parastichy. We sketch the isotopy, leaving the details to the reader: draw lines perpendicular to $L_r$  through the parastichy points $z_{m,j}, j\in \intgr$. The lines separate the plane into parallel strips. Apply a shear within each strip so that the vector $\widevec {z_{m,j}z_{m,{j+1}}}$ becomes parallel to $L_r$, translating the other strips so that the transformation is continuous. 
The points  $\{ z_{i,k} \mid  i<m, k\in \intgr\}$ are all on one side of $LP_{m,n}$ through $z_{m,n}$: the right connected parastichy of each point $z_{i,k}, i<m,$ crosses   $LP_{m,n}$ at $z_{m,k}$, and thus at no other point. In particular, the point  $z_{i,k}, i<m$ is on the same side of  $LP_{m,n}$ as $z_{m-1,k}.$ Since the determinants of the angles between the vectors $\dd_{m}= \widevec{z_{m-1,k}z_{m,k}}$ and the vector $\uu_{k-1}= \widevec{z_{m,k}z_{m,k-1}}$ of  $LP_{m,n}$ are of the same sign for all $k$, the points  $z_{m-1,k}$ are all on the same side of  $LP_{m,n}$. Thus $\{ z_{i,k}, i<m, k\in \intgr\}$  are all on one side of $LP_{m,n}$.   Similar statements hold for right parastichies.
\qed
 \label{lemma:parasthalfplane}

Back to the main proof, from the lemma, we obtain that $LP_{m,n}$ and $LP_{i,k}$  intersect if and only if $i=m$, in which case they coincide. Moreover, one left connected parastichy is a translate of another, as it is easy to check that (assuming $m>i$) $LP_{m,n} = LP_{i,k} + \sum_{j=i+1}^m\uu_i$. Similarly for right parastichies. 

Consider now the tiles adjacent to a point $z_{m,n}$. Since there are only four edges of the graph of the tiling adjacent to $z_{m,n}$, there are at most four tiles adjacent to that point. Without loss of generality, we only consider the tile sharing the edges $\widevec{z_{m-1,n} z_{m,n}}=\dd_m$ and $\widevec{z_{m,n-1} z_{m,n}}=\uu_n.$ We claim that this tile has exactly the vertices $z_{m,n}, z_{m-1,n},$ $z_{m,n-1},$ and $ z_{m-1,n-1}$. These points form the vertices of a parallelogram or a rhombus if the tiling is rhombic, with edges $\dd_m$ and $\uu_n$. To check that this indeed forms a tile, there remains to show that no other than these four tiling points is included in this parallelogram. The parallelogram is in a quadrant formed by the parastichies $LP_{mn}$ and $RP_{mn}$ and containing the point $z_{m-1,n-1}$. From Lemma \ref{lemma:parasthalfplane}, any other point of the tiling contained in the parallelogram must be of the form  $z_{i,j}$ with  $i\leq m, j\leq n$. But the parallelogram is also contained in the quadrant formed by the parastichies $LP_{m-1,n-1}$ and $RP_{m-1,n-1}$ and containing the point $z_{m,n}$, which forces $z_{i,j}$ to satisfy $i\geq m-1, j\geq n-1$. In other words, the only points of the tiling that the parallelogram may contain are its already defined vertices. \qed

\proposition{{\bf (Properties of parastichies)} Connected left parastichies $\underline{LP}_{m,n}$ and $\underline{LP}_{i,k}$ cross only if and only if $i =  m \mod M$, in which case they are equal. Thus there are $M$ left parastichies. Likewise $\underline{RP}_{m,n}$ and $\underline{RP}_{i,k}$ cross if and only if $k =  n \mod N$ in which case they are equal,  and there are $N$ right parastichies. Left and right parastichies $\underline{LP}_{m,n}$ and $\underline{RP}_{i,k}$ cross at the points $\zz_{m+qM, k+pN},  q, p \in \intgr$. Interspaces of a \ph~tiling are parallelograms.}
\label{prop:paras}
\proof Since their lifts do not intersect non trivially by Proposition \ref{prop:propliftparas}, left parastichies can only coincide or be disjoint.    $\underline{LP}_{m,n}$ has lifts $LP_{m,n}+q(1,0) = LP_{m+qM,n+qN}, q \in \intgr$ by Remark \ref{remark:tiling}.  Likewise Parastichy $\underline{LP}_{i,k}$ has lifts $LP_{i+pM,k+pN}, p \in \intgr$. The two parastichies coincide if and only if $m+qM = i+pM$ for some $p, q \in \intgr$, \ie if $i =  m \mod M$. There are thus $M$ distinct parastichies. The proof is identical for right parastichies.

Parastichies $\underline{LP}_{m,n}$ and $\underline{RP}_{i,k}$ intersect at projections by $\pi$ of intersection points of two of their lifts  $LP_{m+qM,n+qN}, q \in \intgr$  and $RP_{i+pM,k+pN}, p \in \intgr$. By Proposition \ref{prop:propliftparas}, these intersection points are $z_{m+qM, k+pN},  q, p \in \intgr.$ 

 A tile of the cover of a \ph~tiling is entirely in a fundamental domain  and is thus isomorphic to its projection on the cylinder, which must thus be a parallelogram and  a tile. All tiles arise this way. \qed\\

\subsection{Front vs. Tiling Parastichy Numbers}
\label{subsec:parastnum}
The following proposition connects the three notions of parastichy numbers encountered so far. The following proposition, about chain parastichy numbers, clearly applies to the special case of front parastichy numbers.

\proposition{In a   \ph~tiling with $M$ down and $N$ up vectors, \ie with parastichy numbers $M,N$, the parastichy numbers of any of its chains are also equal to $M,N$, which are equal to the numbers of left and right parastichies. All the up and down vectors of the tiling are represented in the chain. A  chain must have $M+N$ primordia.}

\label{prop:parastnum}

\proof A left parastichy coming from above a chain $C$ must first intersect $C$  at the origin of one of its down vector, whereas a right parastichy first intersects $C$ at the origin of an up vector. This provides a one-to-one correspondence between left and right parastichies and down and up vectors in a chain respectively. Hence there are $M$ down and $N$ up vectors (for a total of $M+N$ primordia) in $C$. On the other hand, each point of the parastichy $LP_{(k-1, j)}$ is the origin of the down vector $\dd_k$, and likewise for $RP_{(i, l-1)}$ and $\uu_l$. This provides a 1-1 correspondence between parastichies and the vectors that originates at their points, and thus a 1-1 correspondence between the set of up and down vectors of the tiling and those of $C$.
\qed

\remark {\bf (Number of petals in daisies)} The predominance of flowers whose number of petals is a Fibonacci number was observed before people had noticed the relationship of these numbers to that sequence \cite{grew}. Modern studies also show that  the number of petals in \emph{asteracea} (\eg daisies) has a statistical peak at   Fibonacci numbers (see references in \cite{battjes}). Proposition \ref{prop:parastnum}  provides, among other things, a possible explanation as to why this might be.  If we accept that parastichy numbers of the inflorescence of these plants are predominently successive Fibonacci numbers, this phenomenon would simply be a consequence of the fact that the ray petals occur, statistically, at a single primordia front. A front has $N+M$ primordia, a Fibonacci number if $N$ and $M$ are successive Fibonacci numbers. It would be interesting to check experimentally the hypothesis of petals forming predominantly at a single front.

\theorem{In a fat rhombic tiling $\tiling$, there is a unique primordia front at each point.} 
\proof We describe the algorithm that builds the front $p_{i_1}, \ldots, p_{i_{M+N}}$. Given a point $p_{i_1}=z_{m,n}$ of $\tiling$, let $p_{i_2}= z_{m+1,n}$ be the right parent of $p_{i_1}$. By induction let $p_{i_{j+1}}$ be the right child of $p_{i_j}$ unless this child is strictly higher than $p_{i_1}$, in which case let $p_{i_{j+1}}$ be the right parent of $p_{i_j}$.   We now show that this process has an end. Suppose by contradiction that the piecewise linear curve $c_r$ we built crosses the vertical line through $p_{i_1}+(1,0)$ strictly below that point. Construct with a similar algorithm a curve $c_l$ starting from $p_{i_1}+(1,0)$, but going left. The curves $c_r$ and $c_l$ necessarily cross at a tiling point in the strip between the vertical lines through $p_{i_1}$ and $p_{i_1}+(1,0)$. Let $Q$ be the rightmost such crossing point. The right parent of $Q$ is in $c_r$. The right child of $Q$ is in $c_l$ and is thus lower than $p_{i_1}$, which contradicts the algorithm for $c_r$. The set of points obtained is clearly a chain. It is a front because a child of any of its primordia is higher than $p_{i_1}$, by construction. \qed

The periodicity proven in the following proposition is illustrated in Fig. \ref{fig:Periodpen}.

\proposition{ There are $MN$ primordia in a \ph~tiling $\tiling$ of parastichy numbers $M, N$, between a point $z\in \tiling$ and its translate $Z= z+\vec U$, including $Z$, (where $\vec U = \sum_{i=1}^N \uu_i$)  in the ontogenetic order.}
\proof Let   ${\cal F}_z$ and  ${\cal F}_Z$ be the fronts at $z$ and $Z$ respectively. Because of the periodicity, ${\cal F}_Z = {\cal F}_z + \vec U$. The segment of $\tiling$ comprised between $z$ and $Z$ (including $Z$) includes all the points between ${\cal F}_z$ and  ${\cal F}_Z$, as well as  ${\cal F}_Z$. The segment comprises all the segments of $N$ primordia of the left parastichies strictly above  ${\cal F}_z$. Since there are $M$ such parastichies, the number of points in the segment including $Z$ is $MN$. \qed
\label{prop:frontperiod}

\section{The Snow Dynamical System \snow}
\label{sec:snow}
\subsection{Definition of \snow}
\label{subsec:snowdef}

Remember that, in the introduction, we gave the following intuitive definition of the Snow model: given a configuration of disks of equal diameter $D$ on the cylinder, place a new one in the lowest position possible on top of the configuration, avoiding overlaps.  

To turn this intuitive definition into a mathematical one, we made the choice of considering configurations of constant, finite number of disks on the cylinder. We achieve this simply by removing the last primordium in our list at each iterate, making sure that there are enough primordia so that this removal does not have perverse, artificial effects. This allows us to use the framework of dynamical systems where the space of configurations is of constant dimension throughout the time evolution. We also require that the configurations be ordered by height, the highest being the first one - so that the lowest one is the one removed at each iterate. To decide on the location of the new disk (primordium), we slide a circle $y =h$ up the cylinder, starting at the top
 of the configuration, and at each height $h$, we check whether there is room to place a disk of radius $D$  without overlapping disks in the configuration. This checking is done via computing minimum distances to points of the configuration along that circle. When the test is positive, we add the new disk and erase the last disk in the list. We call the height $y=y_*$ at which there is first room to place a primordium at the edge of the meristem the \emph{threshold} value.  In dry mathematical terms, this translates into (see Fig. \ref{fig:MinDis} for an illustration):

\definition{({\bf Snow map}) Define the map \snow~on $K$-tuplets of points of the cylinder $\cyl$  by
$$\snowm (p_1, \ldots,p_K) = (P_1(p_1,\ldots,p_K), p_1, \ldots,p_{K-1}) $$
where each point $p_k$ is given by its angular and height  coordinates $(x_k, y_k)$, and
where the function $P_1$ determines the center of the new primordium in the following way.
Let $$ Dis_{y,k}(x) = dist{((x,y),p_k)},$$ where $dist$ is the usual euclidean distance on $\cyl$, let $$MinDis_y(x) = \min_{k\in \{1, \ldots, K\}} Dis_{y,k}(x),$$ and let 
$$y_*  = \min  \{ y\geq \max_k y_k \ \mid \max_x MinDis_y(x) = D\}.$$ 

Finally,  define $P_1(p_1,\ldots,p_k)$ to be the point $(x_*, y_*)\in \cyl$ at which this ``minimaximin" is attained. If it is attained at several possible values of $x$, choose the smallest of those $x$ in the interval $(-\frac12, \frac 12]$. \label{def:snow}}

\begin{figure}[h] 
   \centering
   \includegraphics[width=4 in]{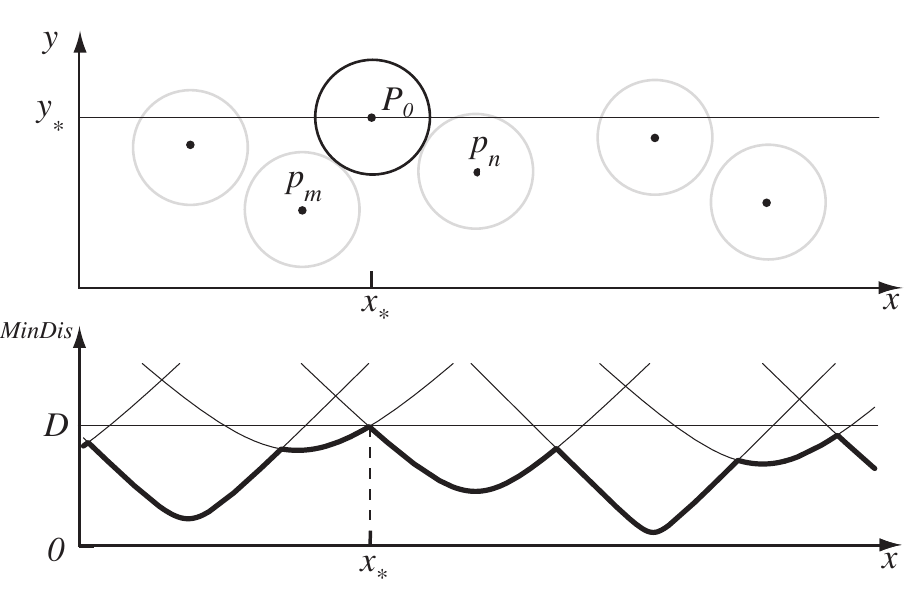} 
   \caption{\footnotesize {The function $MinDis_y$ at the threshold value $y_*$, shown below the corresponding configuration.  The maxima of $MinDis_y$ occur when the graphs (represented with lighter lines) of two convex functions $Dis_{y,k}$  cross. Thus at a maximum the corresponding test point $(x,y)$ is equidistant to its two nearest neighbors in the configuration. As the threshold value $y_*$ is reached where $\max_xMinDis_{y_*}(x) = D$, the point $(x_*, y_*)$ at which the maximum is attained is thus at distance $D$ from its two nearest neighbors $p_m$ and $p_n$: there is exactly enough space for a disk of radius $D$, and it must be tangent to the disks centered at $p_m$ and $p_n$. The convexity of the functions $Dis_{y,m}$  and $Dis_{y,n}$ also implies that $p_m$ and $p_n$ are on opposite sides of $P_1$. } }
   \label{fig:MinDis}
\end{figure}

\remark{Douady and Couder \cite{douadycouder}  used a similar idea in the algorithm for their Snow computer models (see also \cite{kunzthesis}). Instead of the $MinDis_y$ function, they used potentials which are the sum of ``repulsive" interactions with existing primordia, with interactions decaying as the distance  increases. The map \snow~can be seen as the limit of such models as their \emph{rate} of  decay goes to infinity: in the limit, the test primordium only ``feels" the closest primordium, as is the case in \snow. In \cite{douadycouder}, this limit is  called the ``hard disk" case.}

\remark{As shown in Figure \ref{fig:MinDis}, and its caption, the geometry of the function $MinDis_y$ implies the following:

\begin{enumerate}
\item {\it Equidistance and tangency to nearest neighbors. }  If  the two nearest neighbors of $P_1$ are less than $2D$ apart,  $P_1$ is tangent to them - and thus equidistant to them. If they are farther  apart, $P_1$ is located at their midpoint. We use the term of ``parents" for the two closest neighbors, even in the latter situation, generalizing the notion of Section \ref{subsec:parentsEtc}. Correspondingly, $P_1$ is the child of its parents.

\item {\it Opposedness of parents.} Generically (see Section \ref{subsec:differentiability}) , the new primordium has only two parents.  In this case the centers of these two anterior primordia must lay on opposite sides of a vertical line through the center of the new one. Hence, generically $P_1$ has a left and a right parent.
\end{enumerate}
}
\label{remark:Sgeom}

These properties form the basis of our computer algorithm, in which we draw lists of candidates new primordium by placing disks tangentially  to appropriate pairs of existing disks on sufficiently dense configurations. We then weed out the candidates that overlap with existing primordia or whose parents are not opposed and choose the lowest of the remaining candidates.  
In contrast to the above definition, our computer algorithm may add disks lower than the highest disk in the given configuration, when there is room for one - i.e. if the configuration has ``holes". For instance, the algorithm may fill in a necklace until it forms a front. Since all ``decent'' configurations eventually fill in and form a front at their top, we chose to elude, in this paper, the issue of which configurations eventually fill in and concentrate on configurations which already terminate by a front, or perturbations of such configurations.

\remark{\bf (A potential alternate definition of front)} In the case of a rhombic tiling, it is not hard to check that the primordia that contribute to the function $MinDis_y$ for a given $y$ form a front. Thus, for rhombic tilings, a front at a new born primordium can be defined as the set of primordia closest to the meristem. One could use this to generalize the definition of front to general configurations, letting go of the requirement of tangency.
\label{remark:frontclosest}

We will be specially interested in the lattices and tilings that are ``preserved" by \snow. To make this notion more precise:
\definition{{\bf (Dynamical Configurations)} An infinite configuration $\config$ is called \emph{dynamical} if given any of its segments $X$ of length $K$, $\snow(X)$ is the segment of $\config$ immediately above, \ie obtained by shifting the ontogenetic indices of $X$ by 1.}
\label{def:dynamicalconfig}

\subsection{Domain of Differentiability of \snow}
\label{subsec:differentiability}
We call a configuration in $\cyl^K$ a \emph{critical configuration} if two distinct pairs of parent primordia lead to two candidate children primordia at the same (lowest) level for the map \snow. This may occur when the function $MinDis_{y*}(x)$ attains its maximum at two distinct values of $x$. But it can also occur as a \emph{triple tangency} where two pairs of primordia sharing a common primordium give rise to the same child. In that case, the graphs of three functions $Dis_{y_*,k}$ forming $MinDis_{y_*}$ cross at the maximum.

\proposition{ The set of critical configurations is a closed set, finite union of manifolds (maybe  with boundaries) of codimension at least 1 (and thus of measure 0) in $\cyl^K$.}
\proof 
Given the location of one child candidate, one needs one (differentiable) equation to express the fact that another candidate belongs to the same horizontal line.  This equation can be written in the form $f({\config}) = 0$, where ${\config}$ denotes a configuration in $\cyl^K$, and $f$ is the algebraic function giving the difference of height of children of two distinct pairs of primordia in ${\config}$.  The gradient of $f$ is always non zero on the level set $f =0$. Indeed, let ${\config}$ be such that $f({\config})=0$ and let $p_L$ and $p_R$ be the parents of one of the two candidates. Choose an infinitesimal deformation  $\Delta \config$ of ${\config}$ that rotates $p_R$ around $p_L$, leaving all other primordia fixed. The corresponding  displacement of the candidate child is $\frac{\Delta \config }2$ since it rotates at half the radius. $f(\config +\Delta \config)$ is approximately the vertical component of $\frac{\Delta \config}2$, which is not 0, since $p_R$ and $p_L$ are not above one another. 
Hence $0$ is a regular value for $f$ 
and  the equation $f({\config}) = 0$ defines locally a manifold of codimension 1 (see the Preimage Theorem, \cite{difftopo}).

Each choice of  two distinct  ordered pairs of primordia gives rise to such a manifold. The number  of such critical manifolds is thus bounded by the choices of two distinct ordered pairs of distinct indices in $\{1, \ldots, K\}$. The critical set is closed: for each of the choices of pairs of (ordered) pairs of parents, the zero level set of the corresponding function $f$ is closed: at the points of $\cyl^K$ that a given pair of parents ceases to correspond to maxima, another pair must yield a maxima. Hence the limit of a sequence of critical configurations is always critical.
\qed

We call a configuration $\config$ \emph{ $q$-non-critical} if $\snowm^k(\config)$ is not critical for $k\in \{0, \ldots , q\}$ (we use $\snowm^0=Id$ here). We denote by  $NC_q$ the set of $q$-non-critical configuration. Note that  $NC_{q+1}\subset NC_q$.
 
\proposition {The map \snow~is continuous and differentiable on the open set $NC_0$ of non-critical configurations. More generally, the map $\snowm^{q+1}$ is continuous and differentiable on the open set  $NC_q$ of $q$-non-critical configurations.}
\label{prop:differentiable}
\proof $NC_0$ is the complement of a closed set and is thus open. Outside of the set of critical configurations,  the function $x\mapsto MinDis_y(x)$ has a unique maximum for each $y$ near the threshold value $y_*$. This maximum corresponds to two parent primordia, whose child $P_1$ is strictly the lowest candidate primordium. In a neighborhood of a non-critical configuration the parents indices of the new primordium do not change. $P_1$  is a differentiable, algebraic function of the two parents ($P_1$ is  the intersection of two circles centered at the parents).  So the first component function $P_1$ of \snow~in Definition \ref{def:snow} is continuously differentiable (and thus continuous) on non-critical configurations. All other component functions of \snow~are trivially continuously differentiable.

We show that the set $NC_q$ is open, and that $\snowm^{q+1}$ is differentiable on it,  by induction on $q$. We have proven the first step of the induction for $q=0$ above. Assume $NC_k$ is open and $\snowm^{k+1}$ is continuous on $NC_k$ for $k<q$. Let $\config \in NC_q$. This implies that $\snowm^q(\config)\in NC_0$ and that $\config \in  NC_{q-1}$. Since $\snowm^q$ is continuous on the open set $NC_{q-1}$ (by induction hypothesis) and $NC_0$ is open, we can find an open neighborhood $U(\config)\subset NC_{q-1}$ such that $\snowm^q(U(\config))\subset NC_0$. This implies that $U(\config)\subset NC_q$ which makes $NC_q$ open. $\snowm^q$ is differentiable on $NC_q$ since this set is a subset of $NC_{q-1}$. Since
$\snowm^q(NC_q)\subset NC_0$, and $\snowm$ is differentiable on $NC_0$, the composition $\snowm^{q+1}$ is differentiable on $NC_q$, by chain rule.  \qed\\

We call the \emph{orbit segment of length $q$} of a configuration $\config$ the configuration $X_q(\config)$ in $\cyl^q$ made of $q$ first points of $\snowm^q(\config)$.

\begin{corollary}{\bf (Continuity of Parents Data )} The function associating to a configuration $\config$ the parent data of $X_{q+1}(\config)$ is constant on each connected component of $NC_q$.
\end{corollary}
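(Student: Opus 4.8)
The plan is to use the fact that the parent data of $X_{q+1}(\config)$ is a $(q+1)\times(q+1)$ matrix with entries in $\{-1,0,1\}$, so the function in question takes values in a finite, discrete set. A map into a discrete set is continuous if and only if it is locally constant, and a locally constant map on a connected set is constant. Thus it suffices to prove that the parent data function is locally constant on $NC_q$, and the corollary follows by restricting to each connected component.

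First I would assemble the geometric input. By Proposition \ref{prop:differentiable} the map $\snowm^{q+1}$ is continuous on the open set $NC_q$, so $\config \mapsto X_{q+1}(\config)$, which merely reads off the first $q+1$ coordinates, is continuous: the positions of the $q+1$ newest primordia vary continuously with $\config$. I would then isolate the combinatorial content of the birth process. At each of the $q+1$ iterations the new primordium is placed at distance exactly $\diam$ from its two parents and, by non-criticality (no triple tangency), strictly farther than $\diam$ from every other primordium present. As already noted in the proof of Proposition \ref{prop:differentiable}, the indices of these two parents do not change in a neighborhood of a non-critical configuration, so the entire assignment ``iteration $\mapsto$ ordered pair of parent indices'' is locally constant on $NC_q$.

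The key structural step is to show that, \emph{within} the configuration $X_{q+1}(\config)$, two primordia stand in a parent--child (tangency) relation if and only if one is the birth-parent of the other. One direction is immediate: a primordium is by construction the intersection of two circles of radius $\diam$ centered at its parents, hence tangent to them, and this tangency is rigidly preserved under perturbation since the intersection construction keeps the distance equal to $\diam$. For the converse, suppose two members of $X_{q+1}$ are tangent, the younger $P'$ born after the older $P$. Since the list is long enough that only the original primordia are discarded during these iterations, all members of $X_{q+1}$ coexist throughout, so $P$ is present when $P'$ is born; tangency then forces $P$ to be one of the exactly two primordia at distance $\diam$ from $P'$, that is, a parent of $P'$. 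This rules out accidental tangencies and identifies the geometric parent data of $X_{q+1}$ with the locally constant birth-parent data. To fix the signs, I would invoke the opposedness of parents via the convexity argument: at the threshold, the two functions $Dis_{y_*,k}$ realizing the maximum of $MinDis_{y_*}$ cross with slopes of opposite sign, placing each parent strictly to one side of the vertical through the child; strictly, because a parent directly below would force $MinDis_{y_*}$ to exceed $\diam$ just to one side, contradicting the threshold condition. Strict opposedness keeps the $x$-component of each child-to-parent vector bounded away from $0$, so the left/right labels are stable under small perturbation.

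The main obstacle is precisely this structural identification together with strict opposedness: everything hinges on using non-criticality to separate the parent distance (exactly $\diam$) both from all non-parent distances (strictly greater than $\diam$) and from the degenerate ``directly below'' configuration, uniformly enough that the discrete parent data cannot jump. Once this separation is secured, continuity of the positions combined with local constancy of the birth combinatorics yields local constancy of the parent data on $NC_q$, which is what the corollary asserts.
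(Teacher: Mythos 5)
Your proposal is correct and follows essentially the same route as the paper: the decisive fact is that the assignment of parent indices to each of the $q+1$ new primordia is locally constant on the open set $NC_q$ (already noted in the proof of Proposition \ref{prop:differentiable}), and a locally constant map into a discrete set is constant on connected components --- which is exactly the content of the paper's one-sentence proof. The additional care you take in identifying the tangency-based parent data of $X_{q+1}(\config)$ with the birth-parent combinatorics, and in stabilizing the left/right labels via strict opposedness, fills in details the paper treats as definitional; the only caveat is that, per Remark \ref{remark:Sgeom}, a child need not be tangent to its parents when these are more than $2D$ apart, so the ``parent'' relation in the corollary must be read in that generalized (two-nearest-neighbors) sense rather than strictly via tangency.
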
 
\label{corollary:parentscontinuity}

\begin{proof}
Connected components of $NC_q$ are by definition the open sets of configurations whose $k^{th}$ new primordium under \snow~have the same parent indices, for $k\in \{1, \ldots, q+1\}$. \end{proof}

Discontinuities of \snow~do occur at configurations on the boundaries of the connected components of $NC_0$,  where there are multiple maxima for $MinDis_y$ at the threshold level $y = y_*$. Indeed, two different configurations arbitrarily close to such a critical one may yield a new primordium in drastically different positions, although in the long run two such configurations might look arbitrarily similar. This switch of ontogenetic order is what makes the divergence angle a less than adequate classifying tool for the geometry of configurations (see Section \ref{subsec:fibonacci}).

The map \snow~also fails to be differentiable at configurations with a triple tangency, for which $P_1$ has more than two equidistant nearest neighbors, even though it is continuous there. (This occurs for instance when the configuration is a segment of hexagonal lattice, corresponding to at a turning point of a branch of the bifurcation diagram of Figure \ref{fig:bifdiag}.) At those points, there is more than one choice for the differential matrix, violating differentiability.

\section{Dynamics and Geometry}
\label{sec:dyngeom}
In this section, we show that fronts determine the future of a configuration and the its changes of parastichy numbers. We also give some strong evidence that the set of dynamical tilings forms an attractor for \snow.

\subsection{Dynamical Properties of Primordia Fronts}
\label{subsec:frontdyn}
  We say that a front $\cal F$ is a \emph{top front} for a configuration $\config$  if $\cal F \subset \config$ and the points of $\config$ are in or below $\cal F$.  We based some of our fastest algorithms for \snow~on the following proposition:

\proposition{If a configuration $\config$ has a top front, so does $S({\config})$
and the top front of $S({\config})$ is the union of the new primordium $P_1$ and of the points of the top front of $\config$ that are not between the parents of $P_1$. If two configurations in $\cyl^K$ have the same top front, they have the same orbit segments.}

\proof
 If $\config$ has top front $\cal F$, the only functions $Dis_{y,k}$ whose graphs contribute parts to the graph of $x\mapsto MinDis_y(x)$ in the definition of \snow~are those corresponding to primordia in $\cal F$ (Remark \ref{remark:frontclosest}). Hence $\cal F$ determines the new primordium $P_1$ in the iteration. One can check that the union of $P_1$ and of all the primordia in $\cal F$ except for those between (if any) the left parent and right parent of $P_1$ (in the front ordering) constitute a top front for $S({\config})$. By induction all the subsequent primordia in the orbit are determined by $\cal F$. \qed\\

The following proposition is at once simple and we think fundamental to understand phyllotactic transitions - pointing to the central role of the local geometry of fronts. The proof is essentially by picture (See Fig. \ref{fig:transitions}).

\proposition{Given a configuration $\config$ with a top front $\cal F$, the top front $\cal F'$ of  $S({\config})$ has same parastichy numbers as $\cal F$ if and only if the left and right parents of the new primordium $P_1$ are separated by exactly one primordium in $\cal F$. In this case $P_1$ creates a rhombic tile with $\cal F$. If the parents of $P_1$ are adjacent in $\cal F$, one of the parastichy numbers of $\cal F$ increases by one, and $P_1$ forms a triangular tile with $\cal F$. Finally, if the parents are separated by two primordia in $\cal F$, one of the parastichy numbers decreases by one, and the new tile is pentagonal.}
\label{prop:fronttransitions}
\begin{figure}[h] 
   \centering
   \includegraphics[width=6in]{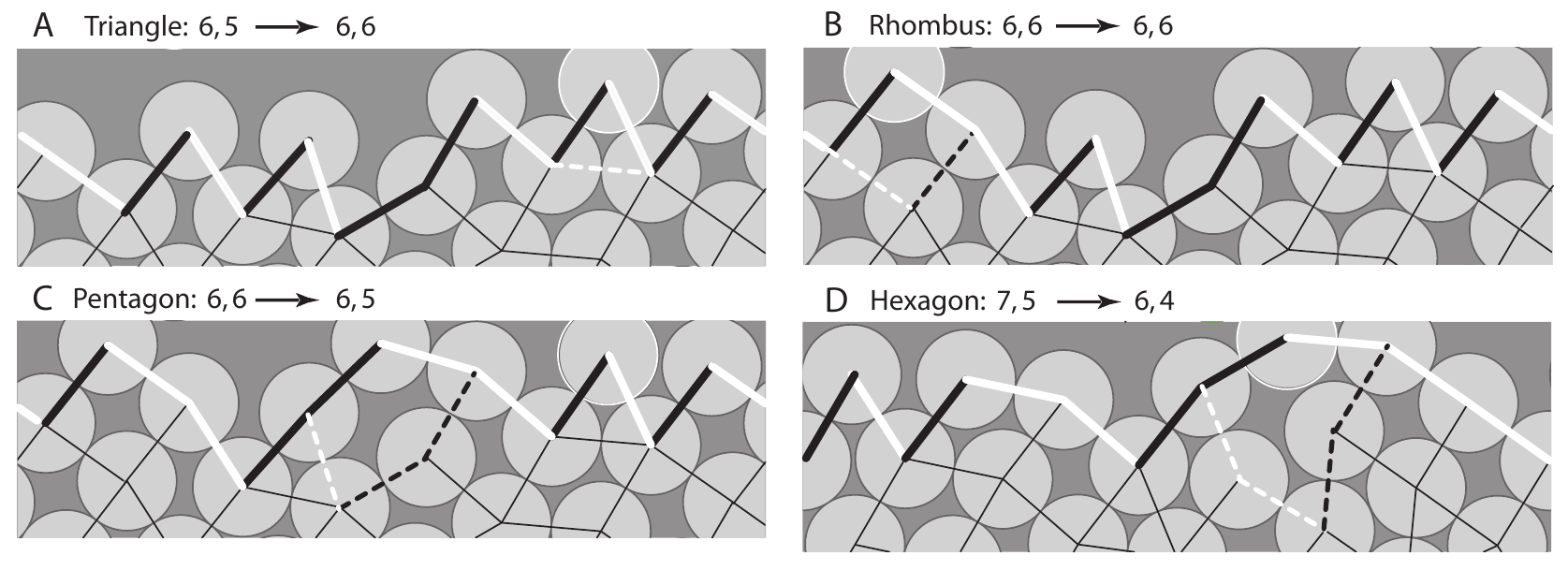} 
   \caption{\footnotesize{\bf The different front transitions. }{\sf (A)} A triangle transition. One down vector is replaced by another down (angle $-\pi/3$ with the original), and an up vector (angle $\pi/3$ with the down vector replaced). One of the front parastichy numbers, the up number, increases by one, while the down number stays the same. {\sf (B)} A rhombus transition. A pair of down and up vectors has switched order in the front, with no change in the sets of down and up vectors. {\sf (C)} A pentagon transition. One down vector and a pair of up vectors are replaced by one up and one down. Hence  the front up parastichy number decreases by one. Note that there are no simple relationship between the angles of the new vectors and the old ones they replace.  {\sf (D)} A (much rarer) hexagon transition. A pair of down and a pair of up vectors are replaced by one vector each. These four orbit segments are taken from the same orbit, with A, B, C corresponding to successive iterations, and D an anterior one. }
  \label{fig:transitions}
\end{figure}

\proof  The transitions are understood by the number of down and up vectors between the left and right parents of the new primordium. These are always replaced by a pair of up and down vectors, in that order. The numbers of up and down vectors replaced determine the shape of the new tile of the ontogenetic graph, and the change of front parastichy numbers. See Fig. \ref{fig:transitions}. \qed

In our numerical experiments, the rhombic transitions are by far the most common, followed by triangles and pentagons - equally common when $\diam$ is constant, as they usually come in pairs. Hexagons are much rarer.

\conjecture{Configurations with top fronts cannot yield polygonal tiles with more than 6 sides.}
\subsection{Fixed Points and Periodic Orbits in the Shape Space} 
\label{subsec:fp.po}
We now consider the shapes of configurations that are preserved under some iteration of \snow. Configurations whose shape is preserved under \emph{any} iteration (fixed points) are found to be the same as for the Hofmeister map $\phi$ of \cite{jns}. In other words (see Definition \ref{def:dynamicalconfig}), the dynamical lattices of \snow~and $\phi$ coincide, for appropriate choices of parameters. On the other hand, we will see that many dynamical tilings for \snow~are not dynamical for $\phi$.

We first introduce a parameterization of the shape space of configurations, and the map \qsnow~that \snow~induces on it. The shape of a configuration is determined by its relative coordinates:
$$\ovv p_k =p_{k+1}-p_{k}, \qquad k \in \{1, \ldots, K-1\}.$$ 
This set of coordinates can be seen as a parameterization of the quotient space of the set of cylindrical configurations modulo the translations on the cylinder. As a particular example, a helical lattice in this quotient space is simply given by the equations $\ovv p_k=p_*$ for all $k\in \{1, \ldots, K-1\}$ and for a fixed $p_*\in \cyl$. 

The map \snow~induces a map \qsnow~on this quotient space, of the form   $$\qsnowm(\ovv p_1 \ldots, \ovv p_{K-1}) )=(\ovv P_1, \ldots, \ovv P_K)= (\ovv P_1(\ovv p_2 \ldots, \ovv p_{K-1}),\ovv p_2 \ldots, \ovv p_{K-2})). $$ 

\emph{Since $K$ above is an arbitrary large integer, we set $K = K-1$ for a lighter notation in the rest of this section.}

Similarly to \snow, in the Hofmeister map $\phi$  the placement of the new primordium is determined by the maxima of the function $MinDis$ (called $D$ in \cite{jns}), but instead of being at a threshold level $y = y_*$, it is evaluated at fixed, equal intervals of $y$. The interval length, called internodal distance and denoted by $y$ in \cite{jns}, is the parameter for that system.   In \cite{jns}, inspired by \cite{leelevitov}, we used hyperbolic geometry to analyse in detail the fixed points set of the Hofmeister map for all values of $y$. It turned out to be a subset of the set of (segments of) fat rhombic lattices, see Figure \ref{fig:bifdiag}. This latter set, described by  van Iterson \cite{vaniterson} must be truncated along crucial segments of its branches to obtain lattices that are dynamical for $\phi$. The same diagram was obtained by Douady \cite{douady}  in a geometric context which is essentially that of this present paper, using Euclidean geometry only. The next proposition shows that the fixed points sets of the maps $\phi$ and \qsnow~are identical when considering all values of the parameters $y$ and $D$.

\proposition{Fixed points for the map \qsnow~are segments of fat  rhombic lattices in $\cyl^{K}$. These fixed points are the same as for the Hofmeister map $\phi$ of \cite{jns} and their set can be visualized in the truncated van Iterson diagram of Figure \ref{fig:bifdiag}. }

\proof A fixed point for \qsnow~is such that $(\ovv P_1, \ldots, \ovv P_K)=(\ovv p_1 \ldots, \ovv p_K)$.
On the other hand,  the definition of \snow~gives
$\ovv P_k=\ovv p_{k-1},  \ k\in \{2,\ldots,K\}$. Hence, $\ovv p_k =\ovv p_{k-1},  \ k\in \{2,\ldots,K\}$. It easy to see that this yields, in the absolute coordinates $p_k= p_0+kp_*$ for some $p_0$ and $p_*$  independent of $k$,    proving that fixed points of \qsnow~are segments of helical lattices. By Remark \ref{remark:Sgeom}, these lattices must be rhombic and opposed. If $\config$ is fixed for \qsnow,  we just saw it is a segment of lattice, and by periodicity of the lattice, it has constant internodal distance $y$ between successive points in its ontogenetic order. Since the new primordium maximizes $MinDis$ at its level, it must correspond to the choice of new primordium for the map $\phi$, for that value of the parameter $y$. Thus $\config$ is fixed under $\phi$, for that value of parameter $y$. Conversely, if $\config$ is fixed under $\phi$, we showed in \cite{jns} that it is a segment of (fat) opposed rhombic lattice. Let $D$ be the mutual distance of points in this lattice. The new primordium of $\config$ under $\phi$ maximizes $MinDis$ at the threshold values corresponding to $D$ and is thus the new primordium for \snow, proving $\config$ is fixed for \qsnow.   \qed
 
\remark In the bifurcation diagram, there is a monotone correspondence $y\mapsto D(y)$ between  the parameter $D$ for the map \snow~and the internodal distance 
$y$ used as parameter for $\phi$ (see \cite{douadycouder}).  When the $M,N$ branch is not truncated (the so called ``regular case", where $M<2N$ and $N<2M$, see \cite{jns}) the angle between the up vector and the down vector of the lattice spans the range of $[\frac{\pi}3, \frac{2\pi}3]$. Simple trigonometry on a necklace of the lattice shows that this corresponds to  the parameter $D$ ranging in $\left[({M^2+N^2+MN})^{-\frac12}, ({M^2+N^2-MN})^{-\frac12}\right]$.
\label{remark:Drange}

\begin{figure}[h] 
   \centering
   \includegraphics[width= 5.5in]{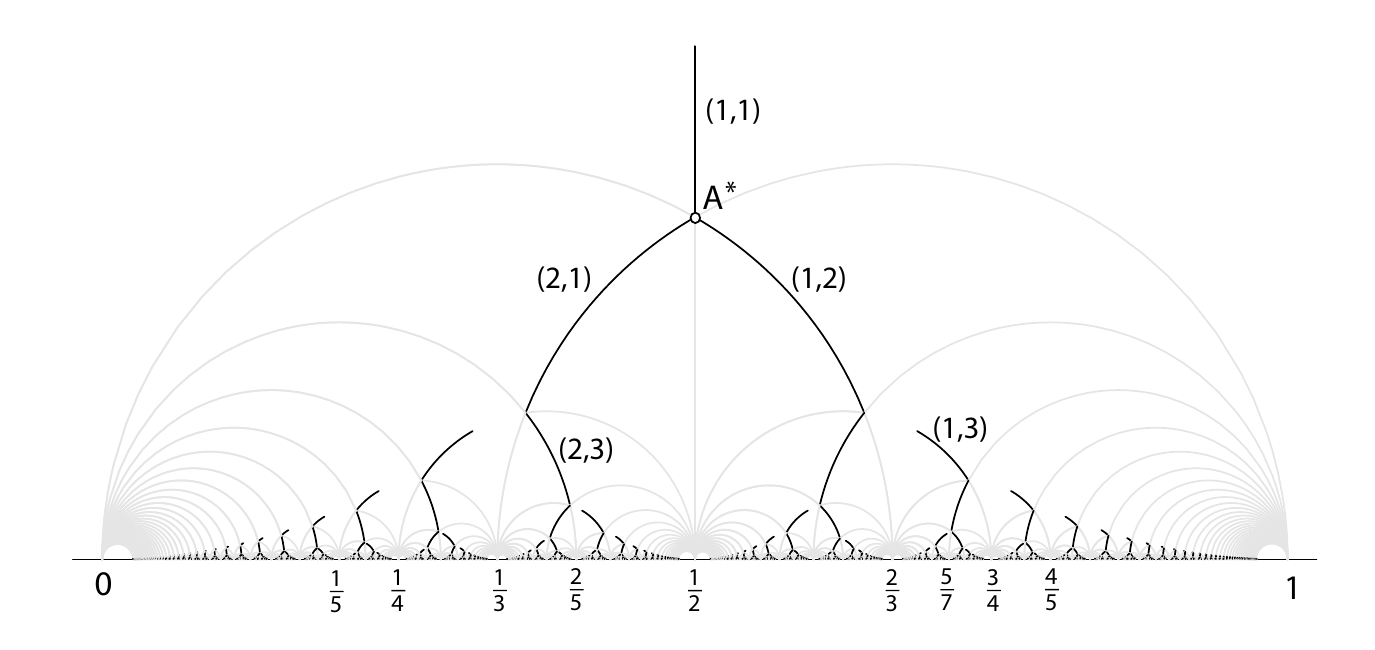} 
   \caption{\footnotesize{Fixed point set for the Snow and Hofmeister maps \qsnow~and $\phi$. Each point $(x,y)$ in this plane represents the generator of a cylindrical lattice. The lattices that correspond to fixed points for \qsnow~and $\phi$ have their generator along the dark arcs of circle - each dark point representing a dynamical lattice. We have indicated a few parastichy number pairs  corresponding to different branches. The grey arcs bound regions of constant parastichy numbers. In \cite{jns} , the coordinate $y$ of the generator is used as a parameter for the map $\phi$, and this graph is the fixed point bifurcation diagram. A monotonic change of coordinates $y \mapsto D(y)$ would give the topologically equivalent bifurcation diagram for \qsnow.}}
   \label{fig:bifdiag}
\end{figure}

\proposition{Periodic points are segments of multilattices. If an orbit is a segment of a \ph~tiling of parastichy number $M,N$, then the tiling is fat and rhombic and the orbit has period $MN$ for \qsnow.}
\proof    Periodic orbits of period $q$ are such that $\qsnowm^q(\ovv p_1 \ldots, \ovv p_K)=(\ovv p_1 \ldots, \ovv p_K)$. The same argument as above implies that $\ovv p_{k+q}= \ovv p_k$.  This makes the configuration a multilattice with generator $\sum_{j=1}^q  \ovv p_j$ and translation vectors $v_k= \sum_{j=1}^k \ovv p_j, \ k\in \{1, \ldots, q\}$. The fact that a tiling-orbit is fat and rhombic is an immediate consequence of the interpretation of \snow~as a process of piling non overlapping disks of the same size in $\cyl$. Opposedness comes from the optimization involved in the definition (see comments at the end of Section \ref{subsec:snowdef}).  The corresponding orbit of \qsnow~is periodic of period $MN$ since, by Proposition \ref{prop:frontperiod}, each front's shape is repeated every $MN$ iterates. \qed

\conjecture{Periodic points for \qsnow~are segments of fat rhombic tilings.}

The proof of this conjecture would rest on the fact (to be established) that no periodic orbit may contain other tiles than rhombi (apart from rhombi at the boundary of fatness which can be interpreted as two triangles).

We now obtain, with relatively little work, infinitely many sets of periodic orbits for the map \qsnow~that could not exist for the Hofmeister map, since two primordia could not be generated at the same height in that model. The cover of a $k$-jugate configuration $L_k$ is obtained from a lattice $L$ by gluing $k$ copies of the cover of $L$, rescaled by $1/k$ (see Fig. \ref{fig:Lattice&Whorls}). Since the cover   $\tilde L_k$ is homothetic to  the cover $\tilde L$, and homothecies preserve angles and equidistance, $L$ is respectively rhombic, opposed, or  fat if and only if $L_k$ is. We now show that the the correspondence $L\mapsto L_k$ maps fixed points of \qsnow~to periodic points of period $k$.

\proposition{ A  lattice $L$ is dynamical if and only if its corresponding $k$-jugate configuration $L_k$ is. A segment of a dynamical $L_k$ is a periodic point for \qsnow, of period $k$.}

\proof  Choose a segment $\config_k$ of $L_k\in \cyl^{kK}$ which has $k$ primordia at the same top level.  The segment $\config_k$, as a subset of $\cyl$, can be seen as $k$ rescaled copies of a segment  $\config$ of $L$ in $\cyl^K$ (see Fig. \ref{fig:Lattice&Whorls}) set side by side on the cylinder. We can choose $\config$ and $\config_k$ to have the same base point $(0,0)$. Accordingly, the graph of the function $MinDis_{y/k}$ in the definition of \snow~for $\config_k$  is made of $k$ copies set side by side, rescaled by $1/k$, of the graph of  $MinDis_{y}$ for $\config$.  The value $y$ is a threshold for the function $MinDis_y$ for $\config$ (given the parameter $\diam$) at a point of $L$ if and only if $y/k$ is a threshold for $\config_k$ (with parameter $D/k$) at $k$ points (on the same level) of $L_k$. This implies that $L_k$ is dynamical if and only if $L$ is dynamical.  Since a segment of $L_k$ has from 1 up to  $k$ primordia at each level, an orbit of a segment of $L_k$ is of period $k$: a segment of  $L_k$ in $\cyl^{kK}$ can translate into another one if and only if they both have the same number of primordia at the top level.    \qed

\subsection{Sufficient Conditions for Periodicity}

This Section provides a useful and easily implemented test to establish in a finite number of iterates of \snow, whether an orbit is part of a tiling, and thus periodic.

We call an \emph{ancestor} of a primordium $p$  in a   \ph~tiling $\tiling$ a primordium $A$ which can be joined to  $p$ by a connected sequence of up and negative down vectors of $\tiling$. In a rhombic tiling, this is equivalent to the intuitive meaning of ancestor (parent of parent of~...). It is not hard to see that in this case, $p$ is on or above the necklace formed by the left and right parastichies between the ancestor $A$ and $A+\vec U$.  The function $\lceil x\rceil$ used below denotes the smallest integer greater than $x$. 

 \theorem{Let $\config$ be a configuration with top front. If the top fronts of $S^q(\config)$ for $q$ in $\{0, \ldots, \lceil\frac{3MN}2\rceil\}$  have all the same  parastichy numbers $(M,N)$, then  any orbit segment of $\config$ is the segment of the same dynamical tiling.  In particular, the \qsnow~-orbit of $\config$ is periodic, of period $MN$. }
 \label{thm:parastnum}\\
 \proof  Consider the rhombic tiling $\tiling$ generated by the left and down vectors of the top front $F_0$ of $\config$.
We will show that $\tiling$ is in fact a dynamical tiling, and that points of $\tiling$ above $F_0$  form the \snow-orbit for $\config$.  Since the front parastichy numbers are constant, all transitions in the orbit are rhombic (see Proposition  \ref{prop:fronttransitions}). 
Assume by induction that the top front $F_k$ (with  $k< \frac  {3MN}2-1$) of $\snowm^k(\config)$ is in $\tiling$ (by hypothesis $F_0$ \emph{is} in $\tiling$). Let $P$ be the new primordium at iteration $k+1$, and $z_{m,n}$ be its left parent in $F_k$. Since the transition is rhombic, $z_{m+1, n+1}$ must be its right parent and $P = z_{m,n}+\uu_{n+1} = z_{m,n+1}\in \tiling$. Thus any orbit segment of $\config$ is a subset of $\tiling$.
\label{theorem:conditionperiodicity}

We will now show that any orbit segment $X$ of $\config$  is  in fact a full segment of the tiling $\tiling$ above $F_0$.  Suppose first that the orbit segment of length $MN$ above $F_0$ is equal to a segment of $\tiling$. By Proposition \ref{prop:frontperiod}, the front at iterate $NM$ is a translate of $F_0$ and thus the orbit shape is periodic of period $MN$ for \qsnow~and must coincide with $\tiling$ above $F_0$.  We now show that this is the only case possible.

Assume by contradiction that some point $z$ of the segment of $\tiling$  of length $MN$ above $F_0$ is \emph{not} in the orbit segment $X$ of same length, and choose $z$ to be the lowest such point above $F_0$. The orbit is then strictly bounded above by the necklace formed by the left and right parastichies between the points $z$ and $z+\vec U$ of $\tiling$. Indeed, since any point in the orbit segment $X$ is in $\tiling$, if a point of $X$ were above the necklace, it would have $z$ as an ancestor, which is absurd since $z$ is not part of the orbit. But, as is not hard to check, the number of points of $\tiling$ comprised between the front immediately below $z$ and the necklace is strictly less than $\lceil\frac{MN}2\rceil$, and thus the number of points between $F_0$ and the necklace is strictly less than $MN+\lceil\frac{MN}2\rceil = \lceil\frac{3MN}2\rceil$. This is a contradiction to the fact that at least  $\lceil\frac{3MN}2\rceil $ points of the orbit are in $\tiling$. \qed

\subsection{Attracting Manifolds of Periodic Points}
\label{subsec:attractor}
The following theorem shows that, around each helical lattice segment of parastichy numbers $(M,N)$ of the bifurcation diagram, apart for its turning points, there exists a superattracting manifold of dimension $M+N$ of dynamical tilings (periodic of period $MN$ for \qsnow) on which neighboring orbits land in finite time. In fact,  such a manifold exists near any sufficiently non-critical segment of dynamical tiling. We remind the reader that $NC_q$ is the open set of $q$-non critical configurations, and that $\rtmndk$ is the manifold of segments of length $k$ of rhombic tilings of parastichy numbers $(M,N)$ and parameter $\diam$.
 
 \begin{theorem}
Let $M$ and $N$ be coprime and let $K\geq M+N$. The set  $\dmndk$ of segments  of dynamical tilings of length $K$ in $NC_{MN}$ is an open submanifold of $\rtmndk$. 
Moreover,  there exists an open neighborhood $\cal V$ of $\dmndk$ in $\cyl^K$  such that, for  any configuration $\config\in \cal V$, and any $j>K$, $S^j(\config)$ is in $\dmndk$. The manifold $\dmndk$ is not empty  when a non-critical $M,N$-lattice of the bifurcation diagram exists for the given parameter $D$. 
\label{thm:attractor}  
\end{theorem}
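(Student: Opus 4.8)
The plan is to prove the three assertions separately, with Theorem~\ref{thm:parastnum} as the engine and the \emph{renewal} property of \snow~(one primordium is discarded per iterate) supplying the \emph{finite}-time nature of the convergence.

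For the submanifold claim I would show that $\dmndk$ is \emph{open} in $\rtmndk$, which by Proposition~\ref{prop:dimension} immediately makes it an open submanifold of dimension $M+N$. The mechanism is an equivalence: for a rhombic tiling segment lying in $NC_{MN}$, being dynamical (Definition~\ref{def:dynamicalconfig}) is the same as \snow~producing a rhombic \emph{transition} at every step, i.e. the left and right parents of each new primordium being separated by exactly one primordium of the current front. Indeed, by Proposition~\ref{prop:fronttransitions} a rhombic transition places the new primordium exactly on the next tiling point, so all-rhombic transitions make \snow~reproduce the tiling; conversely a dynamical tiling must place new primordia on tiling points, forcing the parents to be separated by one. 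Since parent separation is governed by strict inequalities and parent data is locally constant on the open set $NC_{MN}$ (Corollary~\ref{corollary:parentscontinuity}), the all-rhombic condition cuts out a union of connected components of $\rtmndk\cap NC_{MN}$, hence an open set. For a segment already known to lie in $\rtmndk$ one only needs rhombicity over a single period: by Proposition~\ref{prop:frontperiod} the front at step $MN$ is then a translate of the initial one, and periodicity propagates rhombicity to all later steps.

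The core of the theorem is the finite-time attraction. I would choose $\cal V$ small enough that every $\config\in\cal V$ lies in $NC_{MN}$ and, by continuity of parent data around the dynamical tiling segment it is near (Corollary~\ref{corollary:parentscontinuity}), produces only rhombic transitions for as many iterates as Theorem~\ref{thm:parastnum} requires; this is legitimate because members of $\dmndk$, being periodic of period $MN$ (Proposition~\ref{prop:frontperiod}) and non-critical for $MN$ steps, are non-critical for all time and hence have an entire neighborhood inside $NC_{\lceil 3MN/2\rceil}$. The key geometric fact is that \snow~always places a new primordium tangent to its two parents (Remark~\ref{remark:Sgeom}), so every edge it creates has length exactly $\diam$; moreover a rhombic transition, completing the quadrilateral formed by the two parents, the removed middle primordium and the new one, produces a genuine $\diam$-rhombus and leaves the collection of up and down vectors of the front unchanged. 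Because \snow~discards the lowest primordium at each step, for $j>K$ the configuration $S^j(\config)$ consists entirely of \snow-placed primordia whose mutual front edges are all $\diam$-tangencies; its top front is then a closed necklace of $M$ down and $N$ up vectors of length $\diam$ summing to $(1,0)$, that is, a genuine fat rhombic front, independent of the original perturbation. Theorem~\ref{thm:parastnum} then certifies that $S^j(\config)$ is a segment of a single dynamical tiling, and being renewed and non-critical it lies in $\dmndk$. This is exactly the superattracting, finite-time statement: the perturbation is erased once the original $K$ primordia (and those they parented) have been shifted out.

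Invariance of $\dmndk$ under \snow~is then immediate, since a dynamical tiling segment maps to the next segment of the same tiling and, by period-$MN$ periodicity together with membership in $NC_{MN}$, the whole forward orbit stays non-critical. For non-emptiness, a non-critical $(M,N)$-lattice of the bifurcation diagram is a fixed point of \qsnow, hence dynamical; with $M,N$ coprime it is a fat rhombic tiling (Remark~\ref{remark:tiling}), so each of its length-$K$ segments lies in $\rtmndk\cap NC_{MN}$, and therefore in $\dmndk$. I expect the principal obstacle to be the bookkeeping of the \emph{transient}: one must verify that during the first $K$ iterates the configuration neither leaves the component of $NC$ on which parent data is constant nor fails to have every front edge become a \snow-tangency before the originals are flushed. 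Controlling this requires shrinking $\cal V$ in concert with the continuity and differentiability of the iterated map on the nested sets $NC_q$ (Proposition~\ref{prop:differentiable}), and checking that the bound $\lceil 3MN/2\rceil$ of Theorem~\ref{thm:parastnum} indeed exceeds the number of steps needed for the front to be rebuilt entirely from tangencies.
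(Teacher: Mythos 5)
Your proposal follows essentially the same route as the paper: openness of $\dmndk$ via local constancy of parent data on connected components of the non-critical set, finite-time attraction via the renewal of the configuration by tangency-placed primordia once the original $K$ points are flushed, and Theorem~\ref{thm:parastnum} as the certificate that constant front parastichy numbers force a dynamical tiling, with non-emptiness supplied by the non-critical lattices of the bifurcation diagram. The one place to match the paper's care is your single-period shortcut for segments already in $\rtmndk$: Proposition~\ref{prop:frontperiod} describes the tiling rather than the orbit and does not by itself rule out the orbit skipping a tiling point, which is exactly why the paper runs $Q\geq\lceil 3MN/2\rceil$ non-critical iterates even in that case.
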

\begin{proof}   
Take a segment $X_K(\tiling)$ of a dynamical tiling $\tiling$ and assume $X_K(\tiling)$ is in  $NC_{MN}$. Then it   is in fact in $NC_q$ for all $q\geq 0$,  by periodicity (Theorem \ref{theorem:RTperiodic}). Let $O$ be the (open) connected component of  $NC_Q$ containing $X_K(\tiling)$, for a chosen $Q\geq  \lceil\frac{3MN}2\rceil$. By Proposition \ref{prop:dimension}, $X_K(\tiling)$ is also contained in an open subset $U$ of the manifold $\rtmndk$: the boundary of the set of non fat tilings is made of critical tilings, thus, since $X_K(\tiling)$ is in $NC_q$ it is also strictly non-fat.  The set ${\cal O} = U\cap O$ is thus an open submanifold of tilings segments in $\rtmndk$, containing $X_K(\tiling)$. Let $Y$ in ${\cal O}$ be a segment of a tiling $\tiling'$. Since $Y$ is in the same component of $NC_Q$ as $X_K(\tiling)$,  parastichy numbers of the successive top fronts of $\snowm^j(Y)$  are constant  for $0\leq j\leq Q$. Since all the transitions are rhombic, these fronts are all fronts of $\tiling'$.  By Theorem \ref{theorem:conditionperiodicity}, $\tiling'$  is dynamical.  We have shown that $\dmndk$ is an open submanifold of $\rtmndk$. 

 Given $X_K(\tiling)\in \dmndk$, take a  configuration $\config$ in $\cyl^K$ in the same (open) connected component of  $NC_Q$ as $X_K(\tiling)$ ($\config$ need not be a tiling). Since all iterates of $X_K(\tiling)$ have a top front of parastichy numbers $(M,N)$, $\snowm^j(\config)$ must also have a top front of parastichy numbers $(M,N)$ for $k\leq j \leq Q.$  By Theorem \ref{theorem:conditionperiodicity}, $\snowm^j(\config)$ is thus a segment of a dynamical tiling of parastichy numbers $(M,N)$. The union $\cal V$ of all the $NC_Q$-connected  components of configurations in $\dmndk$ is open and attracted to $\dmndk$ in finite time.
 \end{proof}

In the regular case ($M<2N$ and $N<2M$, see Remark \ref{remark:Drange}), each fixed point helical lattice in the range  $D\in\left(({M^2+N^2+MN})^{-\frac12}, ({M^2+N^2-MN})^{-\frac12}\right)$ is non-critical, and by periodicity $q$-non-critical for all $q\geq 0$. Thus for each $D$ in this range, the manifold $\dmndk$ is non-empty, and provides a manifold of dimension $M+N$ of periodic orbits. In the irregular case, the range of allowable $D$ is smaller, but not empty. Since $k$-jugate lattices have $k$ primordia at the same level, they are automatically critical. Nonetheless, a perturbation argument should show:

\conjecture{Around any dynamical $k$-jugate lattice of parastichy numbers $M, N$  whose corresponding helical lattice is non-critical, there is an open set of dynamical tilings in 
$\rtmndk$.}

\conjecture{The set of dynamical tilings of given parameter $\diam$ forms an attracting invariant branched manifold - with branches of different dimensions - for the map \snow~which, in nearby systems, persists as an attracting invariant nearby (branched) manifold.}

Confirming Theorem \ref{thm:attractor}, a computation shows that the characteristic polynomial for the differential of \snow~at a non-critical dynamical lattice of parastichy numbers $(M,N)$ has the neat form $Char(\lambda) = \lambda^B(1-\lambda^M)(1 - \lambda^N)$, where $B= K-M-N$ ($K$ is the dimension of the phase space).  Thus the eigenvalues are either 0 or equal to $M^{th}$ or $N^{th}$ roots of unity. Clearly the dimensions of the generalized eigenspaces corresponding to the roots of unity sum up to $M+N$ which shows that the sum of these spaces must equal the tangent space to $\dmndk$ at $\lattice$. The zero eigenvalue in the complementary subspace shows that $\dmndk$ is normally hyperbolic at $\lattice$. Numerical evidence indicates the same to be true at dynamical tilings. Geometrically, the super attraction correspond to fronts forming in finite time on configurations that might not have them. 

Thus the set of dynamical tiling could entirely be made of pieces of normally hyperbolic invariant manifolds.  There are theorems (\eg \cite{fenichel}) that show that, given certain conditions on the map or flow, normally hyperbolic invariant manifolds survive perturbations of the system, as perturbed invariant manifolds. In the case of the map \snow, we have some hurdles stacked against us: 1) \snow~is not a diffeomorphism (it is not 1-1);  2) \snow~is not continuous everywhere;  3) the set of tilings has branches of various dimensions, some of which connect.

 \section{A Glimpse at the Set of Dynamical Tilings}
\label{subsec:rp2}
We conclude this paper with a numerical study of the topology of dynamical tilings.
To see how dynamical tilings of different parastichy numbers coexist, we look at the shape space of all chains of four primordia of diameter $\diam=0.3$.  Such a chain is given by four points, or vectors   $\vec v_1, \vec v_2,\vec v_3$ and~$\vec v_4$. We fix one primordium at the origin, $\vec v_1=(0,0)$, which has no consequence on the shape of the chain. We choose two angles, $\alpha$ and~$\beta$, as parameters as   Fig.~\ref{fig:parameterization} shows. A choice of these angles gives primordia located at $\vec v_1,\vec v_2$ and~$\vec v_3$ as follows
\begin{eqnarray*}
\vec v_1&=&(0,0),\\ 
\vec v_2 &=& (\diam\cos{\alpha},\diam \sin{\alpha}), \\  
\vec v_3 &=& \vec v_2+ (\diam\cos{\beta},\diam\sin{\beta})=(\diam(\cos\alpha+\cos\beta),\diam(\sin\alpha+\sin\beta)).
\end{eqnarray*}

 \begin{figure}[h] 
   \centering
   \includegraphics[height=1.3 in]{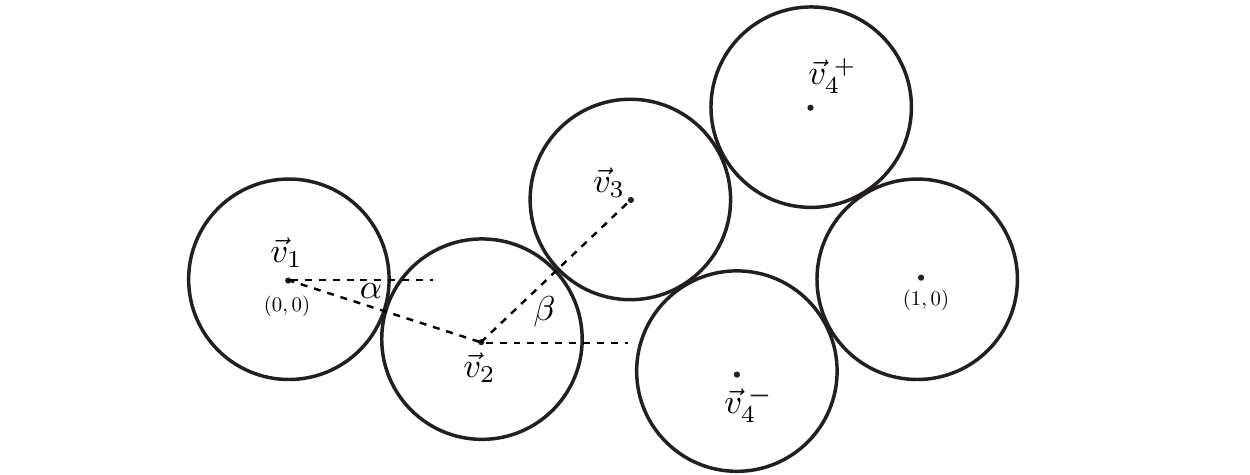} 
   \caption{\footnotesize {Parameterization of   4-chains.            }  }
   \label{fig:parameterization}
\end{figure}

To be able to define $\vec v_4$ and effectively have a chain with four primordia, we must have
$$\diam\le \textrm {distance}(\vec v_3,(1,0))\le 2\diam,$$
which is equivalent to
$$ 1\le \left(\cos\alpha+\cos\beta-\frac 1 \diam\right)^2+(\sin\alpha+\sin\beta)^2\le 4.$$

 \begin{figure}[h] 
   \centering
   \includegraphics[height=3.3 in]{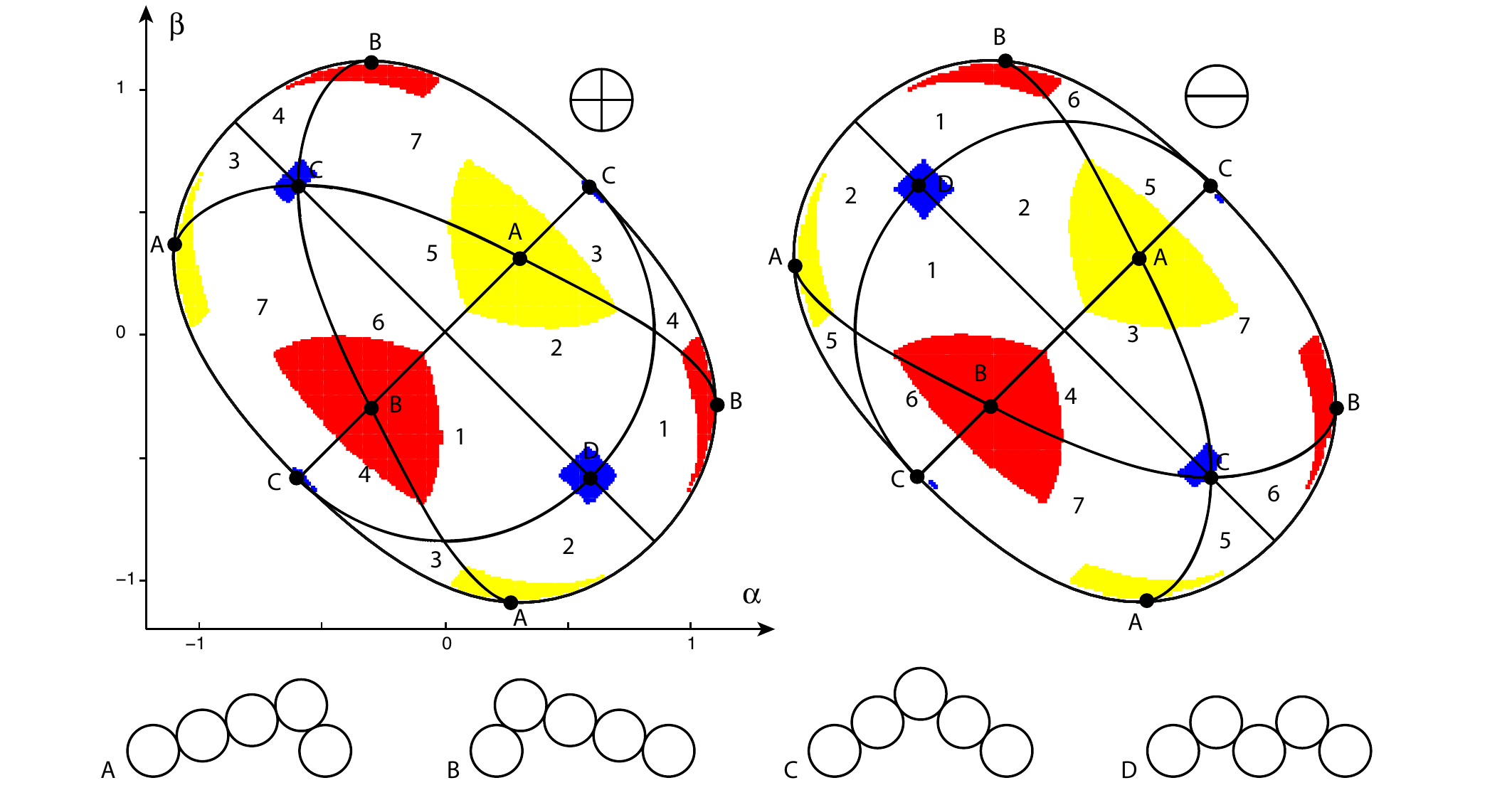} 
   \caption{\footnotesize {The shape space of chains of 4 primordia of diameter $\diam=0.3$. The  regions $\oplus$ and $\ominus$ are identified at their boundary (``equator"), forming a sphere. The black curves drawn are the images of the equator under the cyclic permutations of primordia in the chain. These curves separate regions,  which are identified when representing chains of same shape, according to the numbers shown. The union of the regions 1-7 forms a topological disk with antipodal identification at the boundary, yielding a space topologically equivalent to the projective plane~$\mathbb P^2$.  The point $A$ corresponds to the front of a $(1,3)$ lattice, $B$ to that of the $(3,1)$ lattice, $C$ and $D$ correspond to the two possible fronts of the bi-jugate (2,2) configuration. All these lattices are dynamical. The yellow region corresponds to chains that are fronts of dynamical tilings of parastichy numbers (1,3), the red region corresponds to fronts of (3,1) tilings and the blue one corresponds to fronts of (2,2) tilings.   Other chains are either not front and/or transit in one iterate of \snow~to the above colored regions, or to (2,3) and (3,2) tilings. 
 } }
   \label{fig:partition}
\end{figure}

For our choice of parameter $\diam$, the first inequality is always satisfied. In the case of strict (second) inequality, for each  choice of $\alpha$ and~$\beta$ the placement of primordia $\vec v_4$ is determined up to two possibilities. These are depicted in  Fig.~\ref{fig:parameterization},  labeled as $\vec v_4^{\ +}$ and~$\vec v_4^{\ -}$. So, without the further identifications that we will make below, the parameterized shape space of 4-chains consists of two copies $\oplus, \ominus$ of the same disk-like region, depicted in~Fig.~\ref{fig:partition}. The boundary points of these regions (the \lq\lq equator") correspond to choices of $\alpha$ and~$\beta$ for which the distance between primordia $\vec v_3$ and $\vec v_1=(0,0)=(1,0) \textrm{ mod }1$ is~$2\diam$, and so $\vec v_4^{\ +}=\vec v_4^{\  -}$. Hence, the boundary points of the  region $\ominus$ are identified one by one to those of the region $\oplus$, making the set a topological sphere (for now).

We further identify the four configurations $(\vec v_1, \vec v_2,\vec v_3,\vec v_4)$, $(\vec v_2,\vec v_3,\vec v_4,\vec v_1)$, $(\vec v_3, \vec v_4,\vec v_1,\vec v_2)$ and $(\vec v_4, \vec v_1,\vec v_2,\vec v_3)$, as they have the same shape on the cylinder. This leads us to make some identifications in the above sphere. Fig.~\ref{fig:partition} shows these with a number coding: regions with same number are identified. One can see that the shape space for 4-chains is topologically equivalent to the projective plane~$\mathbb {RP}^2$. The colored regions  correspond to shapes of \emph{fronts} of dynamical tilings of parastichy numbers (1,3), (2,2) and (3,1). This  coloring was obtained numerically by  sweeping the shape space, checking for front conditions, parastichy numbers and rhombic transitions for each chain in a grid of about 10000 points. Only 6 iterates of \snow~were necessary at each chain, thanks to Theorem \ref{theorem:conditionperiodicity}. Note that, after identification, the yellow and red regions each have only one connected component, on which the map has period $3 =  1\times 3= 3\times 1$. On the other hand, the (blue) shape space of fronts of (2,2) dynamical tilings   is disconnected, and the map \qsnow~toggles between one component and the other at each iterate, with period $4 = 2\times 2$. If one were looking at the shape space of \emph{dynamical tilings} (and not their \emph{fronts}), one would have to identify points in each orbit of \qsnow, in each colored region. The space of (2,2) dynamical tilings is then apparently connected: the fronts $\sf C$ and $\sf D$ are identified as belonging to the same cylindrical lattice, for instance.

\conjecture{ The shape space of dynamical tilings of parastichy number $M,N$, and parameter $\diam$ is contractible.}

{\bf Acknowledgments.} Even though  the bulk of the research for this paper occurred before our collaborations on this model with St\'ephane Douady, Jacques Dumais and Scott Hotton, it was solidified by many invaluable discussions with them. Dumais helped us see what might be useful to biologists and suggested our fitting of cylindrical plant patterns to tilings. Hotton made helpful suggestions on the programming used  in Figure \ref{fig:tilingfit}. Douady encouraged us to pursue the study of rhombic tilings and their fronts, showed us enticing experiments about them, and pointed us to the work of van Iterson on zickzacklinie.  Hotton, Dumais, and Luke Grecki read a draft of this paper and made many good suggestions. Jordan Crouser, Anna Naito, Duc Nguyen and Erich Kummerfeld helped gather plant data that was used in Figure \ref{fig:tilingfit}. We thank them all warmly.
This research was supported by the NSF/NIH collaborative  research  grant \# 0540740 and a Mellon collaborative grant.


\begin{thebibliography}{999}
\itemsep -3pt


\bibitem{adler}
Adler I. ``A Model of Contact Pressure in Phyllotaxis." {\it J. Theor. Biol.}
{\bf 45} (1974):1-79

\bibitem{jns}
Atela, P.,  Gol\'e, C. \&  Hotton, S. ``A Dynamical System for Plant Pattern Formation: Rigorous Analysis."  {\it J. Nonlinear Sci.} {\bf 12}:6 (2002):641-676

\bibitem{phylloweb} Atela, P., Gol\'e and Smith students,  {\it Phyllotaxis: an interactive site for the mathematical study of plant pattern formation} http://www.math.smith.edu/~phyllo

\bibitem{auxintraas}
Barbier de Reuille, P. , Bohn-Courseau,I.,  Ljung, K., Morin, H.,  Carraro, N.,  Godin, C.,  and Traas, J.: 
``Computer simulations reveal properties of the cell-cell signaling network at the shoot apex in Arabidopsis."
{\it  Proc. of Natl. Acad. Sci.}, {\bf 103} (2006):1627-1632 


\bibitem{battjes} Battjes, J. \& Prunsinkiewicz, P. ``Modelling Meristic Characters of Asteracean Flowerheads." in  {\it Symmetry in Plants}
World Scientific, (1998):  281-312.



\bibitem{douady}
Douady S. ``The Selection of Phyllotactic Patterns." {\it Symmetry in Plants}
World Scientific, (1998):  335-358.

\bibitem{douadycouder}
Douady S. \& Couder Y.  ``Phyllotaxis as a Self Organizing Iterative
Process." Parts I, II \& III. {\it J. Theor. Biol.}, {\bf 178}  (1996): 255-312.


\bibitem{fenichel}
Fenichel, N., ``Persistence and smoothness of invariant manifolds for flows." {\it Ind. Univ. Math. J.},
{\bf 21}:3  (1971).

\bibitem{difftopo}
Guillemin, V.,  Pollack, A.,  {\it Differential topology}, Prentice-Hall, (1974). 

\bibitem{grew}
Grew, N., {\it Anatomy of Plants}, Rawlings, London (1682) (available on http://www.botanicus.org/)

\bibitem{hofmeister}
Hofmeister W., ``Allgemeine Morphologie der Gewachse." in {\it Handbuch der 
Physiologischen Botanik}, {\bf 1} Engelmann, Leipzig (1868):  405-664 

\bibitem{scottthesis}
Hotton, S., ``Symmetry of Plants", {\it Thesis}, University of California, Santa Cruz (1999)

\bibitem{jpgr} Hotton, S.,  Johnson, V.,  Wilbarger, J., Zwieniecki, K., Atela, P.,  Gol\'e, C. and Dumais, J. ``The Possible and the Actual in Phyllotaxis: Bridging the Gap between Empirical Observations and Iterative Models."  {\it J Plant Growth Regul}  {\bf 25} (2006):  313-323

\bibitem{auxinmjolsness} 
J\"onsson, H.,  Heisler, M.G.,  Shapiro, B.E.,  Meyerowitz, E.M., and Mjolsness, E. 
``An auxin-driven polarized transport model for phyllotaxis." 
{\it Proc. Natl. Acad. Sci.},  {\bf 103} (2006):  1633-1638


\bibitem{leelevitov}
Lee, H.W. \& Levitov L.S. ``Universality in Phyllotaxis: A Mechanical Theory."
{\it Symmetry in Plants,
} World Scientific (1998):  619-653  

\bibitem{HIVnature} Li, S., P. Hill, C.P.,  Sundquist, W.I. \& Finch, J.T. ``Image reconstructions of helical
assemblies of the HIV-1 CA protein."  {\it Nature}, {\bf 407} (2000): 409-413


\bibitem{koch} Koch A.J., Bernasconi, G.  and Rothen, F. ``Phyllotaxis as a Geometrical and Dynamical System." in {\it Symmetry in Plants}, World Scientific Publishers (1998):  459-486 

\bibitem{kunzthesis}
Kunz M. ``Phyllotaxie, billiard polygonaux et th\'eorie des nombres." {\it Th\`ese}, Universit\'e de Lausanne, Switzerland (1997)

\bibitem{munkres} Munkres, J. R. {\it Topology, a first course}, Prentice-Hall (1975)


\bibitem{auxinbern}  Reinhardt, D.,  Mandel, T and Kuhlemeier, C. 
``Physiologists Auxin Regulates the Initiation and Radial Position of Plant  Lateral Organs." 
{\it The Plant Cell}, {\bf 12} (2000):  507-518

\bibitem{robinson}Robinson, C.  {\it Dynamical Systems}, CRC Press, (1994)

\bibitem{shipman} Shipman PD, Newell AC. ``Polygonal planforms and phyllotaxis on plants."  {\it  J Theor Biol. } {\bf 236}(2) (2005):  154-97 



\bibitem{auxinprunsi}  Smith, R.S.,  Guyomarc'h, S.,  Mandel, T.,  Reinhardt, D.,  Kuhlemeier, C., and Prusinkiewicz, P. ``A plausible model of phyllotaxis."
{\it Proc. Natl. Acad. Sci.}, {\bf 103} (2006):  1301-1306 

\bibitem{snow}
Snow M. \& Snow R. ``Minimum Areas and Leaf Determination." {\it Proc. Roy.
Soc.}, {\bf B139} (1952): 545-566 

\bibitem{vaniterson}
Van Iterson G., {\it Mathematische und microscopisch-anatamische Studien
uber Blattstellungen, nebst Betraschungen uber der Schalenbau der
Miliolinen} Gustav-Fischer-Verlag, Jena (1907)

\bibitem{williams} Williams R. F., Brittain, E.G.  ``A geometrical model of phyllotaxis."  {\it Aust. J. Bot.} {\bf 32} (1984):  43-72 

\bibitem{weisse} Weisse, A.  ``Sketch of the mechanical hypothesis of leaf-position." In K. Goebel's {\it Organography of Plants}. I. Clarendon Press. Oxford. (1900):  74-84 


\bibitem{zagorska}
Zag\'orska--Marek  B.  ``Phyllotaxic Diversity in {\it Magnolia} Flowers." {\it Acta Soc. Bot. Poloniae} {\bf 63} (1994): 
117-137.
\end{thebibliography}
\end{document}